   \newcommand\SkipToFmtEnd{}%
   \newcommand\EndFmtInput{}%
   \long\def\SkipToFmtEnd#1\EndFmtInput{}%
\newcommand\ReadOnlyOnce[1]{\@ifundefined{#1}{\@namedef{#1}{}}\SkipToFmtEnd}
\DeclareFontFamily{OT1}{cmtex}{}
\DeclareFontShape{OT1}{cmtex}{m}{n}
  {<5><6><7><8>cmtex8
   <9>cmtex9
   <10><10.95><12><14.4><17.28><20.74><24.88>cmtex10}{}
\DeclareFontShape{OT1}{cmtex}{m}{it}
  {<-> ssub * cmtt/m/it}{}
\DeclareFontShape{OT1}{cmtt}{bx}{n}
  {<5><6><7><8>cmtt8
   <9>cmbtt9
   <10><10.95><12><14.4><17.28><20.74><24.88>cmbtt10}{}
\DeclareFontShape{OT1}{cmtex}{bx}{n}
  {<-> ssub * cmtt/bx/n}{}
\newcommand{\Conid}[1]{\mathit{#1}}
\newcommand{\Varid}[1]{\mathit{#1}}
\newcommand{\anonymous}{\kern0.06em \vbox{\hrule\@width.5em}}
\newcommand{\plus}{\mathbin{+\!\!\!+}}
\renewcommand{\leq}{\leqslant}
\newdimen\mathindent\mathindent\leftmargini}%
\def\resethooks{%
  \global\let\SaveRestoreHook\empty
  \global\let\ColumnHook\empty}
\newcommand*{\savecolumns}[1][default]%
  {\g@addto@macro\SaveRestoreHook{\savecolumns[#1]}}
\newcommand*{\restorecolumns}[1][default]%
  {\g@addto@macro\SaveRestoreHook{\restorecolumns[#1]}}
\newcommand*{\aligncolumn}[2]%
  {\g@addto@macro\ColumnHook{\column{#1}{#2}}}
\newcommand{\onelinecommentchars}{\quad-{}- }
\newcommand{\commentbeginchars}{\enskip\{-}
\newcommand{\commentendchars}{-\}\enskip}
\newcommand{\visiblecomments}{%
  \let\onelinecomment=\onelinecommentchars
  \let\commentbegin=\commentbeginchars
  \let\commentend=\commentendchars}
\newcommand{\invisiblecomments}{%
  \let\onelinecomment=\empty
  \let\commentbegin=\empty
  \let\commentend=\empty}
\newlength{\blanklineskip}
\newcommand{\hsindent}[1]{\quad}
\let\hspre\empty
\let\hspost\empty
\newcommand{\hsnewpar}[1]%
  {{\parskip=0pt\parindent=0pt\par\vskip #1\noindent}}
\newcommand{\hscodestyle}{}
\newcommand{\sethscode}[1]%
  {\expandafter\let\expandafter\hscode\csname #1\endcsname
   \expandafter\let\expandafter\endhscode\csname end#1\endcsname}
   \let\hspre\(\let\hspost\)%
   \let\hspre\(\let\hspost\)%
\newcommand{\plainhs}{\sethscode{plainhscode}}
\def\codeframewidth{\arrayrulewidth}
   \let\endoflinesave=\\
   \framedhslinecorrect\endoflinesave{.5ex}\hline
\newcommand{\framedhslinecorrect}[2]%
  {#1[#2]}
\def\column##1##2{}%
   \newcommand\>[1][]{}\newcommand\<[1][]{}\newcommand\\[1][]{}%
   \def\fromto##1##2##3{##3}%
\let\orighscode=\hscode
   \let\origendhscode=\endhscode
   \def\endhscode{\def\hscode{\endgroup\def\@currenvir{hscode}\\}\begingroup}
\def\hscode{\endgroup\def\@currenvir{hscode}}}%
   \global\let\hscode=\orighscode
   \global\let\endhscode=\origendhscode}%
\tikzstyle{high} =
\tikzstyle{low} = 
\pgfplotsset{compat=1.12}
\newcommand{\nop}[1]{\!\!}
\newcommand{\dragen}{\dragenlogo\xspace}
\newcommand{\dragenfull}{\dragenlogofull\xspace}
\newcommand\quickcheck{\emph{QuickCheck}\xspace}
\newcommand\derive{\emph{derive}\xspace}
\newcommand\megadeth{\emph{MegaDeTH}\xspace}
\newcommand\quickfuzz{\emph{QuickFuzz}\xspace}
\newcommand\feat{\emph{Feat}\xspace}
\newcommand\godeltest{\emph{G\"odelTest}\xspace}
\definecolor{dark-gray}{RGB}{50,50,50}
\definecolor{gray}{RGB}{84,84,84}
\definecolor{bgcol}{RGB}{255,255,240}
\definecolor{dark-green}{RGB}{0,100,0}
\declaretheorem[name=Definition, style=definition, numbered=no]{helpers-def}
\declaretheorem[name=Proposition, style=lemma, numberlike=subsection]{mC}
\declaretheorem[name=Proposition, style=lemma, numberlike=subsection]{mT}
\declaretheorem[name=Proposition, style=lemma, numberlike=subsection]{mC-vs-mT}
\declaretheorem[name=Theorem, style=theorem]{types-matrix}
\newenvironment{CompactItemize}%
  {\begin{list}{$\; \; \; \; \; \; \; \ \  \blacktriangleright$}%
   {\leftmargin=0pt \itemsep=2pt \topsep=5pt
     \parsep=0pt \partopsep=0pt}}%
  {\end{list}}
\newlength\htG\newlength\dpg
\protected\def\dragenlogo{\settoheight{\htG}{G}\settodepth{\dpg}{g}%
  \raisebox{-0.0\dpg}{\includegraphics[height=0.6\htG+\dpg]{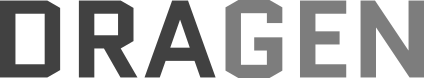}}}
\protected\def\dragenlogofull{\settoheight{\htG}{G}\settodepth{\dpg}{g}%
  \raisebox{-0.0\dpg}{\includegraphics[height=0.6\htG+\dpg]{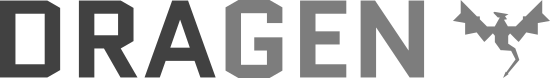}}}
\begin{document}


\title[Branching Processes for QuickCheck Generators]{Branching Processes for QuickCheck Generators}


\ifbool{EXTENDED}{
  \subtitle{(extended version)}
}





\author{Agust\'in Mista}
\affiliation{
  \institution{Universidad Nacional de Rosario} 
  \city{Rosario}
  \country{Argentina}                             
}
\email{amista@dcc.fceia.unr.edu.ar}             

\author{Alejandro Russo}
\affiliation{
  \institution{Chalmers University of Technology}
  \city{Gothenburg}
  \country{Sweden}
}
\email{russo@chalmers.se}

\author{John Hughes}
\affiliation{
  \institution{Chalmers University of Technology}
  \city{Gothenburg}
  \country{Sweden}
}
\email{rjmh@chalmers.se}



\begin{abstract}
%
%
%
%
%
%
In \quickcheck (or, more generally, random testing), it is challenging to
control random data generators' distributions---specially when it comes to
\emph{user-defined algebraic data types} (ADT).
%
%
In this paper, we adapt results from an area of mathematics known as
\emph{branching processes}, and show how they help to analytically predict (at
compile-time) the expected number of generated constructors, even in the
presence of mutually recursive or composite ADTs.
%
%
%
Using our probabilistic formulas, we design heuristics capable of automatically
adjusting probabilities in order to synthesize generators which distributions
are aligned with users' demands.
%
%
%
We provide a Haskell implementation of our mechanism in a tool called \dragen
and perform case studies with real-world applications.
When generating random values, our synthesized \quickcheck generators show
improvements in code coverage when compared with those automatically derived by
state-of-the-art tools.
\end{abstract}


\begin{CCSXML}\begin{hscode}\SaveRestoreHook
\column{B}{@{}>{\hspre}l<{\hspost}@{}}%
\column{E}{@{}>{\hspre}l<{\hspost}@{}}%
\>[B]{}\Varid{ccs2012}\mathbin{>}{}\<[E]%
\\
\>[B]{}\Varid{concept}\mathbin{>}{}\<[E]%
\\
\>[B]{}\Varid{concept\char95 id}\mathbin{>}\mathrm{10011007.10011074}.\mathrm{10011099.10011102}.\mathrm{10011103}\mathbin{</}\Varid{concept\char95 id}\mathbin{>}{}\<[E]%
\\
\>[B]{}\Varid{concept\char95 desc}\mathbin{>}\Conid{Software}\;\Varid{and}\;\Varid{its}\;\Varid{engineering}\mathord{\sim}\Conid{Software}\;\Varid{testing}\;\Varid{and}\;\Varid{debugging}\mathbin{</}\Varid{concept\char95 desc}\mathbin{>}{}\<[E]%
\\
\>[B]{}\Varid{concept\char95 significance}\mathbin{>}\mathrm{500}\mathbin{</}\Varid{concept\char95 significance}\mathbin{>}{}\<[E]%
\\
\>[B]{}\mathbin{/}\Varid{concept}\mathbin{>}{}\<[E]%
\\
\>[B]{}\mathbin{/}\Varid{ccs2012}\mathbin{>}{}\<[E]%
\ColumnHook
\end{hscode}\resethooks
\end{CCSXML}

\ccsdesc[500]{Software and its engineering~Software testing and debugging}


\keywords{Branching process, QuickCheck, Testing, Haskell}

\maketitle


\section{Introduction}
Random property-based testing is an increasingly popular approach to finding
bugs \cite{HughesNSA16,ArtsHNS15,HughesPAN16}.
In the Haskell community, \quickcheck \cite{ClaessenH00} is the dominant tool of
this sort.
%
%
%
%
%
%
%
\quickcheck requires developers to specify \emph{testing properties} describing
the expected software behavior.
%
%
%
Then, it generates a large number of random \emph{test cases} and reports those
violating the testing properties.
\quickcheck generates random data by employing \emph{random test data
generators} or \quickcheck generators for short.
The generation of test cases is guided by the \emph{types} involved in the
testing properties.
It defines default generators for many built-in types like booleans, integers,
and lists.
However, when it comes to user-defined ADTs, developers are usually required to
specify the generation process.
%
%
The difficulty is, however, that it might become intricate to define generators
so that they result in a suitable distribution or enforce data invariants.
The state-of-the-art tools to derive generators for user-defined ADTs can be
classified based on the automation level as well as the sort of invariants
enforced at the data generation phase.
%
%
%
%
\quickcheck and \emph{SmallCheck} \cite{RuncimanNL08} (a tool for writing
generators that synthesize small test cases) use type-driven generators written
by developers.
As a result, generated random values are well-typed and preserve the structure
described by the ADT.
Rather than manually writing generators, libraries \derive \cite{mitchell2007}
and \megadeth \cite{GriecoCB16, grieco2017} automatically synthesize generators
for a given user-defined ADT.
The library \derive provides no guarantees that the generation process
terminates, while \megadeth pays almost no attention to the distribution of
values.
In contrast, \emph{Feat} \cite{DuregardJW12} provides a mechanism to uniformly
sample values from a given ADT.
It enumerates all the possible values of a given ADT so that sampling uniformly
from ADTs becomes sampling uniformly from the set of natural numbers.
\emph{Feat}'s authors subsequently extend their approach to \emph{uniformly}
generate values constrained by user-defined predicates \cite{ClaessenDP14}.
%
Lastly, \emph{Luck} is a domain specific language for manually writing
\quickcheck properties in tandem with generators so that it becomes possible to
finely control the distribution of generated values \cite{LampropoulosGHH17}.

%
In this work, we consider the scenario where developers are not fully aware of
the properties and invariants that input data must fulfill.
This constitutes a valid assumption for \emph{penetration testing}
\cite{pentest}, where testers often apply fuzzers in an attempt to make programs
crash---an anomaly which might lead to a vulnerability.
We believe that, in contrast, if users can recognize specific properties of
their systems then it is preferable to spend time writing specialized generators
for that purpose (e.g., by using \emph{Luck}) instead of considering
automatically derived ones.

Our realization is that \emph{branching processes} \cite{gw1875}, a relatively
simple stochastic model conceived to study the evolution of populations, can be
applied to predict the generation distribution of ADTs' constructors in a simple
and automatable manner.
%
%
To the best of our knowledge, this stochastic model has not yet been applied to
this field, and we believe it may be a promising foundation to develop future
extensions.
The contributions of this paper can be outlined as follows:
\vspace{-5pt}
\begin{CompactItemize}
\item We provide a mathematical foundation which helps to analytically
  characterize the distribution of constructors in derived \quickcheck
  generators for ADTs.
\item We show how to use type reification to simplify our prediction process and
  extend our model to mutually recursive and composite types.
\item We design (compile-time) heuristics that automatically search for
  probability parameters so that distributions of constructors can be adjusted
  to what developers might want.
  %
\item We provide an implementation of our ideas in the form of a Haskell
  library\footnote{Available at \url{https://bitbucket.org/agustinmista/dragen}}
  called \dragenfull (the Danish word for \emph{dragon}, here standing for
  \emph{Derivation of RAndom GENerators}).
\item We evaluate our tool by generating inputs for real-world programs, where
  it manages to obtain significantly more code coverage than those random inputs
  generated by \megadeth's generators.
\end{CompactItemize}

\vspace{-5pt}
Overall, our work addresses a timely problem with a neat mathematical insight
that is backed by a complete implementation and experience on third-party
examples.

%







\section{Background}
\label{sec:QC}

In this section, we briefly illustrate how \quickcheck random generators work.
%
%
%
%
%
We consider the following implementation of binary trees:
%
%
%
\begin{hscode}\SaveRestoreHook
\column{B}{@{}>{\hspre}l<{\hspost}@{}}%
\column{E}{@{}>{\hspre}l<{\hspost}@{}}%
\>[B]{}\mathbf{data}\;\Conid{Tree}\mathrel{=}Leaf_A\mid Leaf_B\mid Leaf_C\mid \Conid{Node}\;\Conid{Tree}\;\Conid{Tree}{}\<[E]%
\ColumnHook
\end{hscode}\resethooks

\begin{figure}[b] 
\vspace{-10pt}
\begin{framed}
\vspace{-10pt}
\begin{hscode}\SaveRestoreHook
\column{B}{@{}>{\hspre}l<{\hspost}@{}}%
\column{3}{@{}>{\hspre}l<{\hspost}@{}}%
\column{22}{@{}>{\hspre}c<{\hspost}@{}}%
\column{22E}{@{}l@{}}%
\column{25}{@{}>{\hspre}l<{\hspost}@{}}%
\column{E}{@{}>{\hspre}l<{\hspost}@{}}%
\>[B]{}\mathbf{instance}\;\Conid{Arbitrary}\;\Conid{Tree}\;\mathbf{where}{}\<[E]%
\\
\>[B]{}\hsindent{3}{}\<[3]%
\>[3]{}\Varid{arbitrary}\mathrel{=}\Varid{oneof}\;{}\<[22]%
\>[22]{}[\mskip1.5mu {}\<[22E]%
\>[25]{}\Varid{pure}\;Leaf_A,\Varid{pure}\;Leaf_B,\Varid{pure}\;Leaf_C{}\<[E]%
\\
\>[22]{},{}\<[22E]%
\>[25]{}\Conid{Node}\mathop{\langle \texttt{\$} \rangle}\Varid{arbitrary}\mathop{\langle \ast \rangle}\Varid{arbitrary}\mskip1.5mu]{}\<[E]%
\ColumnHook
\end{hscode}\resethooks
\vspace{-15pt}
\caption{\label{gen:derive}Random generator for \ensuremath{\Conid{Tree}}.}
\vspace{-5pt}
\end{framed}
\end{figure}



\noindent
In order to help developers write generators, \quickcheck defines the
\ensuremath{\Conid{Arbitrary}} type-class with the overloaded symbol \ensuremath{\Varid{arbitrary}\mathbin{::}\Conid{Gen}\;\Varid{a}}, which
denotes a monadic generator for values of type \ensuremath{\Varid{a}}.
%
%
%
%
%
%
%
%
%
Then, to generate random trees, we need to provide an instance of the
\ensuremath{\Conid{Arbitrary}} type-class for the type \ensuremath{\Conid{Tree}}.
%
%
%
Figure \ref{gen:derive} shows a possible implementation.
%
%
%
At the top level, this generator simply uses \quickcheck's primitive \ensuremath{\Varid{oneof}\mathbin{::}[\mskip1.5mu \Conid{Gen}\;\Varid{a}\mskip1.5mu]\to \Conid{Gen}\;\Varid{a}} to pick a generator from a list of generators with uniform
probability.
This list consists of a random generator for each possible choice of data
constructor of \ensuremath{\Conid{Tree}}.
%
We use \emph{applicative style} \cite{Mcbride2008} to describe each one of them
idiomatically.
So, \ensuremath{\Varid{pure}\;Leaf_A} is a generator that always generates \ensuremath{Leaf_A}s, while \ensuremath{\Conid{Node}\mathop{\langle \texttt{\$} \rangle}\Varid{arbitrary}\mathop{\langle \ast \rangle}\Varid{arbitrary}} 
is a generator that always generates \ensuremath{\Conid{Node}} constructors, ``filling'' its
arguments by calling \ensuremath{\Varid{arbitrary}} recursively on each of them.
\looseness=-1

Although it might seem easy, writing random generators becomes cumbersome very
quickly.
Particularly, if we want to write a random generator for a user-defined ADT \ensuremath{\Conid{T}},
it is also necessary to provide random generators for every user-defined ADT
inside of \ensuremath{\Conid{T}} as well!
%
%
%
%
What remains of this section is focused on explaining the state-of-the-art
techniques used to \emph{automatically} derive generators for user-defined ADTs
via type-driven approaches.

\subsection{Library \derive}
%
%
%
%
%
The simplest way to automatically derive a generator for a given ADT is the one
implemented by the Haskell library \derive \cite{mitchell2007}.
%
%
%
%
This library uses Template Haskell \cite{SheardJ02} to automatically synthesize
a generator for the data type \ensuremath{\Conid{Tree}} semantically equivalent to the one
presented in Figure \ref{gen:derive}.
%
%
%

While the library \derive is a big improvement for the testing process,
%
%
%
its implementation has a serious shortcoming when dealing with recursively
defined data types: in many cases, there is a non-zero probability of generating
a recursive type constructor every time a recursive type constructor gets
generated, which can lead to infinite generation loops.
A detailed example of this phenomenon%
\ifbool{EXTENDED}{
  is given in Appendix \ref{app:derive}.
}{
  is presented in the supplementary material \cite{extended}.
}
%
%
%
%
%
%
%
%
%
%
In this work, we only focus on derivation tools which accomplish terminating
behavior, since we consider this an essential component of well-behaved
generators.
\looseness=-1

\subsection{\megadeth}
%
%


%

\begin{figure}[b] 
\vspace{-10pt}
\begin{framed}
\vspace{-10pt}
\begin{hscode}\SaveRestoreHook
\column{B}{@{}>{\hspre}l<{\hspost}@{}}%
\column{3}{@{}>{\hspre}l<{\hspost}@{}}%
\column{5}{@{}>{\hspre}l<{\hspost}@{}}%
\column{7}{@{}>{\hspre}l<{\hspost}@{}}%
\column{23}{@{}>{\hspre}l<{\hspost}@{}}%
\column{42}{@{}>{\hspre}l<{\hspost}@{}}%
\column{E}{@{}>{\hspre}l<{\hspost}@{}}%
\>[B]{}\mathbf{instance}\;\Conid{Arbitrary}\;\Conid{Tree}\;\mathbf{where}{}\<[E]%
\\
\>[B]{}\hsindent{3}{}\<[3]%
\>[3]{}\Varid{arbitrary}\mathrel{=}\Varid{sized}\;\Varid{gen}\;\mathbf{where}{}\<[E]%
\\
\>[3]{}\hsindent{2}{}\<[5]%
\>[5]{}\Varid{gen}\;\mathrm{0}\mathrel{=}\Varid{oneof}{}\<[E]%
\\
\>[5]{}\hsindent{2}{}\<[7]%
\>[7]{}[\mskip1.5mu \Varid{pure}\;Leaf_A,\Varid{pure}\;Leaf_B,\Varid{pure}\;Leaf_C\mskip1.5mu]{}\<[E]%
\\
\>[3]{}\hsindent{2}{}\<[5]%
\>[5]{}\Varid{gen}\;\Varid{n}\mathrel{=}\Varid{oneof}{}\<[E]%
\\
\>[5]{}\hsindent{2}{}\<[7]%
\>[7]{}[\mskip1.5mu \Varid{pure}\;Leaf_A,\Varid{pure}\;Leaf_B,\Varid{pure}\;Leaf_C{}\<[E]%
\\
\>[5]{}\hsindent{2}{}\<[7]%
\>[7]{},\Conid{Node}\mathop{\langle \texttt{\$} \rangle}\Varid{gen}\;{}\<[23]%
\>[23]{}(\Varid{div}\;\Varid{n}\;\mathrm{2})\mathop{\langle \ast \rangle}\Varid{gen}\;{}\<[42]%
\>[42]{}(\Varid{div}\;\Varid{n}\;\mathrm{2})\mskip1.5mu]{}\<[E]%
\ColumnHook
\end{hscode}\resethooks
\vspace{-15pt}
\caption{\label{fig:mega} \megadeth generator for \ensuremath{\Conid{Tree}}.}
\vspace{-5pt}
\end{framed}
\end{figure}
%
%
The second approach we will discuss is the one taken by \megadeth, a
meta-programming tool used intensively by \quickfuzz
\cite{GriecoCB16, grieco2017}.
%
%
%
Firstly, \megadeth derives random generators for ADTs as well as all of its
nested types---a useful feature not supported by \derive.
%
%
%
Secondly, \megadeth avoids potentially infinite generation loops by setting an
upper bound to the random generation recursive depth.

Figure \ref{fig:mega} shows a simplified (but semantically equivalent) version
of the random generator for \ensuremath{\Conid{Tree}} derived by \megadeth.
%
%
This generator uses \quickcheck's function \ensuremath{\Varid{sized}\mathbin{::}(\Conid{Int}\to \Conid{Gen}\;\Varid{a})\to \Conid{Gen}\;\Varid{a}} to
build a random generator based on a function (of type \ensuremath{\Conid{Int}\to \Conid{Gen}\;\Varid{a}}) that
limits the possible recursive calls performed when creating random values.
%
%
The integer passed to \ensuremath{\Varid{sized}}'s argument is called the \emph{generation size}.
%
%
When the generation size is zero (see definition \ensuremath{\Varid{gen}\;\mathrm{0}}), the generator only
chooses between the \ensuremath{\Conid{Tree}}'s terminal constructors---thus ending the generation
process.
If the generation size is strictly positive, it is free to randomly generate any
\ensuremath{\Conid{Tree}} constructor (see definition \ensuremath{\Varid{gen}\;\Varid{n}}).
When it chooses to generate a recursive constructor, it reduces the generation
size for its subsequent recursive calls by a factor that depends on the number
of recursive arguments this constructor has (\ensuremath{\Varid{div}\;\Varid{n}\;\mathrm{2}}).
In this way, \megadeth ensures that all generated values are finite.
%
%
%
%

Although \megadeth generators always terminate, they have a major practical
drawback: in our example, the use of \ensuremath{\Varid{oneof}} to uniformly decide the next
constructor to be generated produces a generator that generates leaves
approximately three quarters of the time (note this also applies to the
generator obtained with \derive from Figure \ref{gen:derive}).
%
%
This entails a distribution of constructors heavily concentrated on leaves, with
a very small number of complex values with nested nodes, regardless how large
the chosen generation size is---see Figure \ref{fig:tree_megadeth_feat} (left).
\looseness=-1

\subsection{\feat}
The last approach we discuss is \feat \cite{DuregardJW12}.
This tool determines the distribution of generated values in a completely
different way:
%
it uses uniform generation based on an \emph{exhaustive enumeration of all the
  possible values of the ADTs being considered}.
\feat automatically establishes a bijection between all the possible values of a
given type \ensuremath{\Conid{T}}, and a finite prefix of the natural numbers.
%
%
%
%
%
%
%
%
Then, it guarantees a \emph{uniform generation over the complete space of values
  of a given data type \ensuremath{\Conid{T}}} up to a certain size.%
\footnote{We avoid including any source code generated by \feat, since it works
  by synthetizing \ensuremath{\Conid{Enumerable}} type-class instances instead of \ensuremath{\Conid{Arbitrary}} ones.
  Such instances give no insight into how the derived random generators work.
  %
}
%
%
However, the distribution of size, given by the number of constructors in the
generated values, is highly dependent on the structure of the data type being
considered.

Figure \ref{fig:tree_megadeth_feat} (right) shows the overall distribution shape
of a \quickcheck generator derived using \feat for \ensuremath{\Conid{Tree}} using a generation
size of 400, i.e., generating values of up to 400 constructors.%
\footnote{
  %
  We choose to use this generation size here since it helps us to compare
  \megadeth and \feat with the results of our tool in Section
  \ref{sec:casestudies}.
  %
}
%
%
Notice that all the generated values are close to the maximum size!
%
%
%
This phenomenon follows from the exponential growth in the number of possible
\ensuremath{\Conid{Tree}}s of \ensuremath{\Varid{n}} constructors as we increase \ensuremath{\Varid{n}}.
%
%
In other words, the space of \ensuremath{\Conid{Tree}}s up to 400 constructors is composed to a
large extent of values with around 400 constructors, and (proportionally) very
few with a smaller number of constructors.
Hence, a generation process based on uniform generation of a natural number
(which thus ignores the structure of the type being generated) is biased very
strongly towards values made up of a large number of constructors.
In our tests, no \ensuremath{\Conid{Tree}} with less than 390 constructors was ever generated.
In practice, this problem can be partially solved by using a variety of
generation sizes in order to get more diversity in the generated values.
However, to decide which generation sizes are the best choices is not a trivial
task either.
As consequence, in this work we consider only the case of fixed-size random
generation.

\begin{figure}[t] 
  \includegraphics[width=\columnwidth]{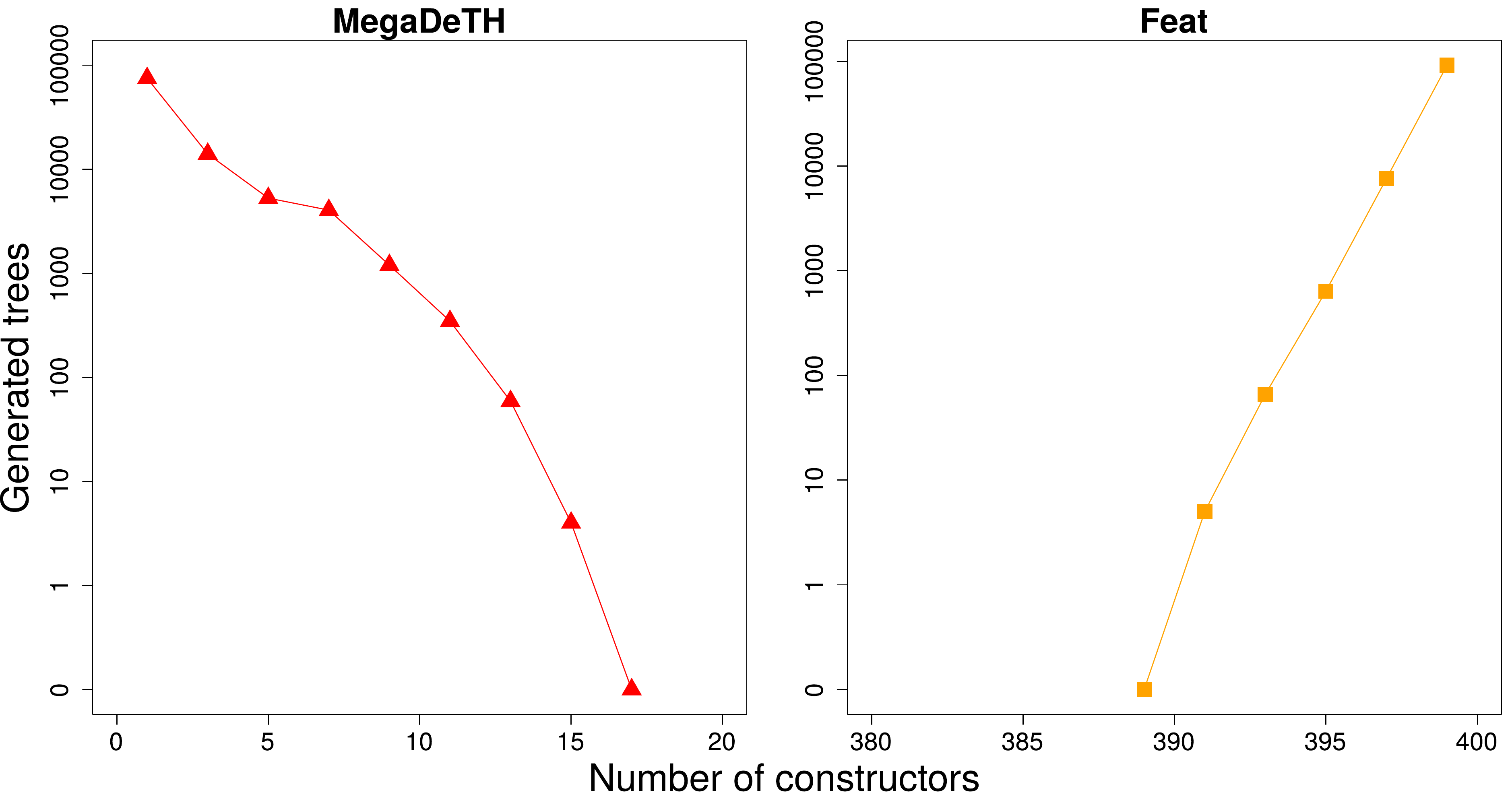}
  \caption{Size distribution of 100000 randomly generated \ensuremath{\Conid{Tree}} values using
    \megadeth ({\color{red} $\blacktriangle$}) with generation size 10, and
    \feat ({\color[RGB]{255, 163, 0} {\tiny $\blacksquare$}}) with generation
    size 400.}
  \label{fig:tree_megadeth_feat}
  \vspace{-5pt}
\end{figure}

%
As we have shown, by using both \megadeth and \feat, the user is tied to the
fixed generation distribution that each tool produces, which tends to be highly
dependent on the particular data type under consideration on each case.
Instead, this work aims to provide a \emph{theoretical framework able to predict
  and later tune the distributions of automatically derived generators}, giving
the user a more flexible testing environment, while keeping it as automated as
possible.

\section{Simple-Type Branching Processes}
\label{sec:bp}
Galton-Watson Branching processes (or branching processes for short) are a
particular case of Markov processes that model the growth and extinction of
populations.
%
%
%
Originally conceived to study the extinction of family names in the Victorian
era, this formalism has been successfully applied to a wide range of research
areas in biology and physics---see the textbook by \citeauthor{gwbook}
\cite{gwbook} for an excellent introduction.
%
%
%
%
In this section, we show how to use this theory to model \quickcheck's
distribution of constructors.
\looseness=-1


We start by analyzing the generation process for the \ensuremath{\Conid{Node}} constructors in the
data type \ensuremath{\Conid{Tree}} as described by the generators in Figure \ref{gen:derive} and
\ref{fig:mega}.
From the code, we can observe that the stochastic process they encode satisfies
the following assumptions (which coincide with the assumptions of Galton-Watson
branching processes):
%
\begin{inparaenum}[i)]
\item With a certain probability, it starts with some initial \ensuremath{\Conid{Node}} constructor.
\item At any step, the probability of generating a \ensuremath{\Conid{Node}} is not affected by the
  \ensuremath{\Conid{Node}}s generated before or after.
\item The probability of generating a \ensuremath{\Conid{Node}} is independent of where in the tree
that constructor is about to be placed.
\end{inparaenum}

The original Galton-Wat\-son process is a simple stochastic process that
counts the population sizes at different points in time called
\emph{generations}.
For our purposes, populations consist of \ensuremath{\Conid{Node}} constructors, and generations
are obtained by selecting tree levels.

%
%
%
%
%
%

Figure \ref{fig:gw:simple} illustrates a possible generated value.
It starts by generating a \ensuremath{\Conid{Node}} constructor at generation (i.e., depth) zero
($G_0$), then another two \ensuremath{\Conid{Node}} constructors as left and right subtrees in
generation one ($G_1$), etc.
(Dotted edges denote further constructors which are not drawn, as they are not
essential for the point being made.)
%
%
This process repeats until the population of \ensuremath{\Conid{Node}} constructors becomes extinct
or stable, or alternatively grows forever.

\begin{wrapfigure}{r}{0.25\textwidth}
  \vspace{-8pt}
  \begin{framed}
  \vspace{-3pt}
  \newcommand{\eX}{\dimexpr\fontcharht\font`X\relax}
  \newcommand{\tux}{\includegraphics[height=1.5\eX]{tux_mono.png}}
  \tikzstyle{densely dotted}=[dash pattern=on \pgflinewidth off 1pt]
  \centering
  \resizebox {4cm} {!} {
    \hspace{-10pt}
  \begin{tikzpicture}
      [level 1+/.style={level distance=1cm, sibling distance=-0.1cm}]
    \Tree
    [.\node(Z0){\ensuremath{\Conid{Node}}};
      [.\node(Z1l){\ensuremath{\Conid{Node}}};
        \node(Z2l1){\ensuremath{Leaf_A}};
        \node(Z2l2){\ensuremath{\Conid{Node}}};
      ]
      [.\node(Z1r){\ensuremath{\Conid{Node}}};
        \node(Z2l3){\ensuremath{Leaf_B}};
        \node(Z2l4){\ensuremath{\Conid{Node}}};
      ]
    ]
    \node[below = -0.15 of Z2l2] (More2) {} ;
    \node[below = -0.15 of Z2l4] (More4) {} ;
    \node[below left  = 0.2 and -0.25 of Z2l2] (More21) {} ;
    \node[below right = 0.2 and -0.25 of Z2l2] (More22) {} ;
    \node[below left  = 0.2 and -0.25 of Z2l4] (More41) {} ;
    \node[below right = 0.2 and -0.25 of Z2l4] (More42) {} ;
    \draw[densely dotted, thick](More2.center)--(More21);
    \draw[densely dotted, thick](More2.center)--(More22);
    \draw[densely dotted, thick](More4.center)--(More41);
    \draw[densely dotted, thick](More4.center)--(More42);
    \node[left = 1.5cm of Z0]         (Z0Text) {$G_0$};
    \node[below = 0.5cm of Z0Text]  (Z1Text) {$G_1$};
    \node[below = 0.5cm of Z1Text]  (Z2Text) {$G_2$};
  \end{tikzpicture}
  }
\vspace{-22pt}
\caption{\label{fig:gw:simple} Generation of \ensuremath{\Conid{Node}} constructors.}
\vspace{-5pt}
\end{framed}
\vspace{-12pt}
\end{wrapfigure}

The mathematics behind the Galton-Watson process allows us to predict the expected
number of offspring at the \emph{n}th-generation, i.e., the number of \ensuremath{\Conid{Node}}
constructors at depth \emph{n} in the generated tree.
Formally, we start by introducing the random variable $R$ to denote the number
of \ensuremath{\Conid{Node}} constructors in the next generation generated by a \ensuremath{\Conid{Node}} constructor
in this generation---the $R$ comes from ``reproduction'' and the reader can
think it as a \ensuremath{\Conid{Node}} constructor reproducing \ensuremath{\Conid{Node}} constructors.
To be a bit more general, let us consider the \ensuremath{\Conid{Tree}} random generator
automatically generated using \derive (Figure \ref{gen:derive}), but where the
probability of choosing between any constructor is no longer uniform.
Instead, we have a $p_C$ probability of choosing the constructor \ensuremath{\Conid{C}}.
%
%
These probabilities are external parameters of the prediction mechanism, and
Section \ref{sec:implementation} explains how they can later be instantiated
with actual values found by optimization, enabling the user to tune the
generated distribution.

%
%
%
We note $p_{Leaf}$ as the probability of generating a leaf of any kind, i.e.,
$p_{Leaf} = p_{LeafA} + p_{LeafB} + p_{LeafC}$.
In this setting, and assuming a parent constructor \ensuremath{\Conid{Node}}, the probabilities of
generating $R$ numbers of \ensuremath{\Conid{Node}} offspring in the next generation (i.e., in the
recursive calls of \ensuremath{\Varid{arbitrary}}) are as follows:
%
%
%
\begin{align*}
  &P(R = 0) = p_{Leaf} \cdot p_{Leaf}\\
  &P(R = 1) = p_{Node} \cdot p_{Leaf} + p_{Leaf} \cdot p_{Node}
    = 2 \cdot p_{Node} \cdot p_{Leaf} \\
  &P(R = 2) = p_{Node} \cdot p_{Node}
\end{align*}
%
One manner to understand the equations above is by considering what \quickcheck
does when generating the subtrees of a given node.
For instance, the cases when generating exactly one \ensuremath{\Conid{Node}} as descendant ($P(R =
1)$) occurs in two situations: when the left subtree is a \ensuremath{\Conid{Node}} and the right
one is a \ensuremath{\Conid{Leaf}}; and viceversa.
The probability for those events to occur is $p_{Node} * p_{Leaf}$ and $p_{Leaf}
* p_{Node}$, respectively.
Then, the probablity of having exactly one \ensuremath{\Conid{Node}} as a descendant is given by
the sum of the probability of both events---the other cases follow a similar
reasoning.

Now that we have determined the distribution of $R$, we proceed to introduce the
random variables $G_n$ to denote the population of \ensuremath{\Conid{Node}} constructors in the
$n$th generation.
We write $\xi^n_i$ for the random variable which captures the number of
(offspring) \ensuremath{\Conid{Node}} constructors at the \emph{n}th generation produced by the
$i$th \ensuremath{\Conid{Node}} constructor at the \emph{(n-1)}th generation.
It is easy to see that it must be the case that
$
G_n = \xi^n_1 + \xi^n_2 + \cdots + \xi^n_{G_{n-1}}
$.
%
%
To deduce $E[G_n]$, i.e. the expected number of \ensuremath{\Conid{Node}}s in the $n$th generation,
we apply the (standard) Law of Total Expectation $E[X] = E[E[X | Y]]$
\footnote{
$E[X | Y]$ is a function on the random variable $Y$, i.e., $E[X | Y] y =
E[ X | Y = y]$ and therefore it is a random variable itself.
In this light, the law says that if we observe the expectations of $X$ given the
different $y_s$, and then we do the expectation of all those values, then we have
the expectation of $X$.  }
with $X =
G_n$ and $Y = G_{n-1}$ to obtain:
%
%
\begin{align}
E[G_n] = E [ E [G_n | G_{n-1}] ]. \label{zn:ex}
\end{align}
%
%
By expanding $G_n$, we deduce that:
%
\begin{align*}
  E[G_n | G_{n-1}]
  &= E[\xi^n_1 + \xi^n_2 +\! \cdots\! + \xi^n_{G_{n-1}} | G_{n-1}]\\
  &= E[\xi^n_1 | G_{n-1}] + E[\xi^n_2 | G_{n-1}]
  +\! \cdots\! + E[\xi^n_{G_{n-1}} | G_{n-1}]
  \nonumber
\end{align*}
%
%
%
Since $\xi^n_1$, $\xi^n_2$, ..., and $\xi^n_{G_{n-1}}$ are \emph{all
  governed by the distribution captured by the random variable $R$} (recall the
assumptions at the beginning of the section), we have that:
\begin{equation}
  E[G_n | G_{n-1}]
  = E[R | G_{n-1}] + E[R | G_{n-1}] + \cdots + E[R | G_{n-1}]
  \nonumber
\end{equation}
Since $R$ is \emph{independent of the generation where \ensuremath{\Conid{Node}} constructors
  decide to generate other \ensuremath{\Conid{Node}} constructors}, we have that
\begin{equation}
  E[G_n | G_{n-1}]
  = \underbrace{E[R] + E[R] + \cdots + E[R]}_{G_{n-1} \ \text{times}}
  = E[R]\! \cdot\! G_{n-1}
  \label{zn:m}
\end{equation}
From now on, we introduce $m$ to denote the mean of $R$, i.e., \emph{the mean of
  reproduction}.
Then, by rewriting $m = E[R]$,
%
%
we obtain:
\begin{align}
  E[G_n]
  \stackbin[]{(\ref{zn:ex})}{=}
  E [E[G_n | G_{n-1}]]
  \stackbin[]{(\ref{zn:m})}{=}
  E [m\! \cdot\! G_{n-1}]
  \stackbin[]{\text{m is constant}}{=}
  E[G_{n-1}]\! \cdot\! m
  \nonumber
\end{align}
By unfolding this recursive equation many times, we obtain:
%
\begin{equation}
  E[G_n] = E[G_0]\! \cdot\! m^n \label{zn:final}
\end{equation}

%
\noindent
As the equation indicates, the expected number of \ensuremath{\Conid{Node}} constructors at the
\emph{n}th generation is affected by the mean of reproduction.
Although we obtained this intuitive result using a formalism that may look
overly complex, it is useful to understand the methodology used here.
In the next section, we will derive the main result of this work following the
same reasoning line under a more general scenario.
%

We can now also predict the total expected number of individuals {\em up to} the
\emph{n}th generation.
For that purpose, we introduce the random variable $P_n$ to denote the
population of \ensuremath{\Conid{Node}} constructors up to the \emph{n}th generation.
It is then easy to see that $P_n = \sum_{i=0}^n G_i$ and consequently:
\begin{align}
  E[P_n]
  \!=\! \sum_{i=0}^n E[G_i]
  \stackbin[]{(\ref{zn:final})}{=}
  \sum_{i=0}^n E[G_0] \cdot m^i
  \stackbin[]{}{=}
  E[G_0]\! \cdot\! \left( \frac{1\! -\! m^{n+1} }{1\! -\! m} \right)
\label{simple:geo}
\end{align}
where the last equality holds by the geometric series definition.
This is the general formula provided by the Galton-Watson process.
In this case, the mean of reproduction for \ensuremath{\Conid{Node}} is given by:
%
\begin{equation}
  m = E[R] = \sum_{k = 0}^{2} k \cdot P(R = k) = 2 \cdot p_{Node}
  \label{exp:R} 
\end{equation}
%
%
%
By (\ref{simple:geo}) and (\ref{exp:R}), the expected number of \ensuremath{\Conid{Node}}
constructors up to generation $n$ is given by the following formula:
\begin{align}
  E[P_n]\! =\! E[G_0]\! \cdot\!
  \left( \frac{1\! -\! m^{n+1}}
  {1\! -\! m} \right)
  \!=\! p_{\ensuremath{\Conid{Node}}}\! \cdot\!
  \left( \frac{1\! -\! (2 \cdot p_{\ensuremath{\Conid{Node}}})^{n+1}}
  {1\! -\! 2\! \cdot\! p_{\ensuremath{\Conid{Node}}}} \right)
  \nonumber
\end{align}
%
%
%
If we apply the previous formula to predict the distribution of constructors
induced by \megadeth in Figure \ref{fig:mega}, where $p_{LeafA} = p_{LeafB} = p_{LeafC} = p_{Node} =
0.25$, we obtain an expected number of \ensuremath{\Conid{Node}} constructors up to level 10 of
0.4997, which denotes a distribution highly biased towards small values, since
we can only produce further subterms by producing \ensuremath{\Conid{Node}}s.
However, if we set $p_{LeafA} = p_{LeafB} = p_{LeafC} = 0.1$ and $p_{Node} =
0.7$, we can predict that, as expected, our general random generator will
generate much bigger trees, containing an average number of 69.1173 \ensuremath{\Conid{Node}}s up
to level 10!
Unfortunately, we cannot apply this reasoning to predict the distribution of
constructors for derived generators for ADTs with more than one non-terminal
constructor.
For instance, let us consider the following data type definition:
%
%
\begin{hscode}\SaveRestoreHook
\column{B}{@{}>{\hspre}l<{\hspost}@{}}%
\column{3}{@{}>{\hspre}l<{\hspost}@{}}%
\column{E}{@{}>{\hspre}l<{\hspost}@{}}%
\>[3]{}\mathbf{data}\;\Conid{Tree'}\mathrel{=}\Conid{Leaf}\mid Node_A\;\Conid{Tree'}\;\Conid{Tree'}\mid Node_B\;\Conid{Tree'}{}\<[E]%
\ColumnHook
\end{hscode}\resethooks
In this case, we need to separately consider that a \ensuremath{Node_A} can generate not
only \ensuremath{Node_A} but also \ensuremath{Node_B} offspring (similarly with \ensuremath{Node_B}).
A stronger mathematical formalism is needed.
%
%
The next section explains how to predict the generation of this kind of data
types by using an extension of Galton-Waston processes known as \emph{multi-type
  branching processes}.



\section{Multi-Type Branching Processes}
\label{sec:bp2}


\begin{figure}[b]
\vspace{-10pt}
\begin{framed}
\vspace{-10pt}
\begin{hscode}\SaveRestoreHook
\column{B}{@{}>{\hspre}l<{\hspost}@{}}%
\column{3}{@{}>{\hspre}l<{\hspost}@{}}%
\column{5}{@{}>{\hspre}l<{\hspost}@{}}%
\column{7}{@{}>{\hspre}l<{\hspost}@{}}%
\column{9}{@{}>{\hspre}c<{\hspost}@{}}%
\column{9E}{@{}l@{}}%
\column{12}{@{}>{\hspre}l<{\hspost}@{}}%
\column{14}{@{}>{\hspre}l<{\hspost}@{}}%
\column{19}{@{}>{\hspre}c<{\hspost}@{}}%
\column{19E}{@{}l@{}}%
\column{22}{@{}>{\hspre}l<{\hspost}@{}}%
\column{E}{@{}>{\hspre}l<{\hspost}@{}}%
\>[3]{}\mathbf{instance}\;\Conid{Arbitrary}\;\Conid{Tree'}\;\mathbf{where}{}\<[E]%
\\
\>[3]{}\hsindent{2}{}\<[5]%
\>[5]{}\Varid{arbitrary}\mathrel{=}\Varid{sized}\;\Varid{gen}\;\mathbf{where}{}\<[E]%
\\
\>[5]{}\hsindent{2}{}\<[7]%
\>[7]{}\Varid{gen}\;\mathrm{0}\mathrel{=}\Varid{pure}\;\Conid{Leaf}{}\<[E]%
\\
\>[5]{}\hsindent{2}{}\<[7]%
\>[7]{}\Varid{gen}\;\Varid{n}{}\<[14]%
\>[14]{}\mathrel{=}\Varid{chooseWith}{}\<[E]%
\\
\>[7]{}\hsindent{2}{}\<[9]%
\>[9]{}[\mskip1.5mu {}\<[9E]%
\>[12]{}(p_{Leaf}{}\<[19]%
\>[19]{},{}\<[19E]%
\>[22]{}\Varid{pure}\;\Conid{Leaf}){}\<[E]%
\\
\>[7]{}\hsindent{2}{}\<[9]%
\>[9]{},{}\<[9E]%
\>[12]{}(p_{NodeA}{}\<[19]%
\>[19]{},{}\<[19E]%
\>[22]{}Node_A\mathop{\langle \texttt{\$} \rangle}\Varid{gen}\;(n{-}1)\mathop{\langle \ast \rangle}\Varid{gen}\;(n{-}1)){}\<[E]%
\\
\>[7]{}\hsindent{2}{}\<[9]%
\>[9]{},{}\<[9E]%
\>[12]{}(p_{NodeB}{}\<[19]%
\>[19]{},{}\<[19E]%
\>[22]{}Node_B\mathop{\langle \texttt{\$} \rangle}\Varid{gen}\;(n{-}1))\mskip1.5mu]{}\<[E]%
\ColumnHook
\end{hscode}\resethooks
\vspace{-15pt}
\caption{\label{fig:treep:dragen} \dragen generator for \ensuremath{\Conid{Tree'}}}
\vspace{-5pt}
\end{framed}
\end{figure}

In this section, we present the basis for our main contribution: \emph{the
  application of multi-type branching processes to predict the distribution of
  constructors}.
%
We will illustrate the technique by considering the \ensuremath{\Conid{Tree'}} ADT that we
concluded with in the previous section.
%

Before we dive into technicalities, Figure \ref{fig:treep:dragen} shows the
automatically derived generator for \ensuremath{\Conid{Tree'}} that our tool produces.
Our generators depend on the (possibly) different probabilities that
constructors have to be generated---variables \ensuremath{p_{Leaf}}, \ensuremath{p_{NodeA}}, and \ensuremath{p_{NodeB}}.
These probabilities are used by the function \ensuremath{\Varid{chooseWith}\mathbin{::}[\mskip1.5mu (\Conid{Double},\Conid{Gen}\;\Varid{a})\mskip1.5mu]\to \Conid{Gen}\;\Varid{a}}, which picks a random generator of type \ensuremath{\Varid{a}} with an explicitly given
probability from a list.
%
%
This function can be easily expressed by using \quickcheck's primitive
operations and therefore we omit its implementation.
Additionally note that, like \megadeth, our generators use \ensuremath{\Varid{sized}} to limit the
number of recursive calls to ensure termination.
%
%
%
%
%
%
We note that the theory behind branching processes is able to predict the
termination behavior of our generators
%
%
and we could have used this ability to ensure their termination without the need
of a depth limiting mechanism like \ensuremath{\Varid{sized}}.
However, using \ensuremath{\Varid{sized}} provides more control over the obtained generator
distributions.

To predict the distribution of constructors provided by \linebreak \dragen
generators, we introduce a generalization of the previous Galton-Watson
branching process called multi-type Galton-Watson branching process.
This generalization allows us to consider several \emph{kinds of individuals},
i.e., constructors in our setting, to procreate (generate) different \emph{kinds
  of offspring} (constructors).
Additionally, this approach allows us to consider not just one constructor, as
we did in the previous section, but rather to consider all of them at the same
time.

Before we present the mathematical foundations, which follow a similar line of
reasoning as that in Section \ref{sec:bp}, Figure \ref{fig:gw:multi} illustrates
a possible generated value of type \ensuremath{\Conid{Tree'}}.

\begin{wrapfigure}{r}{0.25\textwidth}
\vspace{-5pt}
\begin{framed}
\tikzstyle{densely dotted}=[dash pattern=on \pgflinewidth off 1pt]
\centering
\hspace{5pt}
  \begin{tikzpicture}[level 1+/.style={level distance=0.8cm}]
    \Tree
      [.\node(Z0){\ensuremath{Node_A}};
      [.\node(Z1l){\ensuremath{Node_B}};
        \node(Z2l1){\ensuremath{Node_A}};
      ]
      [.\node(Z1r){\ensuremath{Node_A}};
        \node(Z2l3){\ensuremath{Node_B}};
        \node(Z2l4){\ensuremath{\Conid{Leaf}}};
      ]
    ]
    \node[below = -0.15 of Z2l1] (More1) {} ;
    \node[below = -0.15 of Z2l3] (More3) {} ;
    \node[below left  = 0.2 and -0.25 of Z2l1] (More11) {} ;
    \node[below right = 0.2 and -0.25 of Z2l1] (More12) {} ;
    \node[below = 0.25                of Z2l3] (More31) {} ;
    \draw[densely dotted, thick](More1.center)--(More11);
    \draw[densely dotted, thick](More1.center)--(More12);
    \draw[densely dotted, thick](More3.center)--(More31);
  \end{tikzpicture}
\vspace{-10pt}
\caption{\label{fig:gw:multi} A generated value of type \ensuremath{\Conid{Tree'}}.}
\end{framed}
\vspace{-10pt}
\end{wrapfigure}


In the generation process, it is assumed that \emph{the {kind} (i.e., the
  constructor) of the parent might affect the probabilities of reproducing
  (generating) offspring of a certain {kind}}.
Observe that this is the case for a wide range of derived ADT generators, e.g.,
choosing a terminal constructor (e.g., \ensuremath{\Conid{Leaf}}) affects the probabilities of
generating non-terminal ones (by setting them to zero).
The population at the \emph{n}th generation is then characterized as a vector of
random variables $G_n = (G_n^1, G_n^2, \cdots, G_n^d)$, where $d$ is the number
of different kinds of constructors.
%
Each random variable $G_n^i$ captures the number of occurrences of the
\emph{i}th-constructor of the ADT at the \emph{n}th generation.
Essentially, $G_n$ ``groups'' the population at level \emph{n} by the
constructors of the ADT.
%
%
By estimating the expected shape of the vector $G_n$, it is possible to obtain
the expected number of constructors at the \emph{n}th generation.
%
%
Specifically, we have that
$E[G_n] = (E[G_n^1], E[G_n^2], \cdots, E[G_n^d])$.
To deduce $E[G_n]$, we focus on deducing each component of the vector.
%
%
%
%
%
%
%
%

As explained above, the reproduction behavior is determined by the kind of the
individual.
In this light, we introduce random variable $R_{ij}$ to denote a parent $i$th
constructor reproducing a $j$th constructor.
As we did before, we apply the equation $E[X] = E[E[X | Y]]$ with $X = G_n^j$
and $Y = G_{n-1}$ to obtain $ E[G_n^j] = E [ E[G_n^j | G_{n-1}]]$.
%
%
To calculate the expected number of $j$th constructors at the level $n$ produced
by the constructors present at level $(n-1)$, i.e., $E[G_n^j | G_{(n-1)}]$, it
is enough to count the expected number of children of kind $j$ produced by the
different parents of kind $i$, i.e., $E[R_{ij}]$, times the amount of parents of
kind $i$ found in the level $(n-1)$, i.e., $G_{(n-1)}^i$.
This result is expressed by the following equation marked as ($\star$), and is
formally verified%
\ifbool{EXTENDED}{
  in the Appendix \ref{app:expectation}.%
}{
  in the supplementary material.%
}

\begin{equation}
  E[G_n^j | G_{n-1}]
  \stackbin{(\star)}{=}
  \sum_{i=1}^{d} G_{(n-1)}^i \! \cdot\! E[R_{ij}]
  = \sum_{i=1}^{d} G_{(n-1)}^i  \! \cdot\! m_{ij}
  \label{comp:vector}
\end{equation}

Similarly as before, we rewrite $E[R_{ij}]$ as $m_{ij}$, which now represents a
single expectation of reproduction indexed by the kind of both the parent and
child constructor.

%
%
\paragraph{Mean matrix of constructors}
In the previous section, $m$ was the expectation of reproduction of a single
constructor.
Now we have $m_{ij}$ as the expectation of reproduction indexed by the parent
and child constructor.
In this light, we define $M_C$, the \emph{mean matrix of constructors} (or mean
matrix for simplicity) such that each $m_{ij}$ stores the expected number of
$j$th constructors generated by the $i$th constructor.
$M_C$ is a parameter of the Galton-Watson multi-type process and can be built at
compile-time using statically known type information.
We are now able to deduce $E[G_n^j]$.
\begin{align*}
  E[G_n^j]
  &= E[E[G_n^j | G_{n-1}]]
    \stackbin[]{(\ref{comp:vector})}{=}
  E \left[ \sum_{i=1}^{d} G_{(n-1)}^i \! \cdot\! m_{ij} \right] \\
  &= \sum_{i=1}^{d} E[G_{(n-1)}^i \! \cdot\! m_{ij}]
  = \sum_{i=1}^{d} E[G_{(n-1)}^i] \! \cdot\! m_{ij}
\end{align*}
Using this last equation, we can rewrite $E[G_n]$ as follows.
\begin{align*}
  E[G_n]
  = \left( \sum_{i=1}^{d} E[ G_{(n-1)}^1]\! \cdot\! m_{i1},
  \cdots,
  \sum_{i=1}^{d} E[ G_{(n-1)}^d]\! \cdot\! m_{id} \right)
\end{align*}
By linear algebra, we can rewrite the vector above as the matrix multiplication
${E[G_n]^T = E[G_{n-1}]^T\!\cdot\! M_C}$.
By repeatedly unfolding this definition, we obtain that:
\begin{align}
  E[G_n]^T = E[G_0]^T \cdot (M_C)^n
  \label{eq:matrix}
\end{align}
This equation is a generalization of (\ref{zn:final}) when considering many
constructors.
As we did before, we introduce a random variable $P_n = \sum_{i=0}^n G_i$ to
denote the population up to the \emph{n}th generation.
It is now possible to obtain the expected population of all the constructors but
in a clustered manner:
\begin{align}
  E[P_n]^T
  \stackbin[]{}{=} E \left[ \sum_{i=0}^n G_i \right]^T
  \stackbin[]{}{=} \sum_{i=0}^n E[G_i]^T
  \stackbin[]{(\ref{eq:matrix})}{=} \sum_{i=0}^n E[G_0]^T\! \cdot\! (M_C)^n
  \label{eq:sum:multi}
\end{align}
It is possible to write the resulting sum as the closed formula:
\begin{align}
  \label{eq:total:multi}
  E[P_n]^T
  =
  E[G_0]^T \cdot \left( \frac{I - (M_C)^{n+1}}{I - M_C} \right)
\end{align}
where $I$ represents the identity matrix of the appropriate size.
Note that equation $(\ref{eq:total:multi})$ only holds when $(I - M_C)$ is
non-singular, however, this is the usual case.
%
%
%
%
When $(I - M_C)$ is singular, we resort to using equation $(\ref{eq:sum:multi})$
instead.
Without losing generality, and for simplicity, we consider equations
$(\ref{eq:sum:multi})$ and $(\ref{eq:total:multi})$ as interchangeable.
They are the general formulas for the Galton-Watson multi-type branching
processes.

Then, to predict the distribution of our \ensuremath{\Conid{Tree'}} data type example, we proceed
to build its mean matrix $M_C$.
For instance, the mean number of \ensuremath{\Conid{Leaf}}s generated by a \ensuremath{Node_A} is:
\begin{align}
  m_{\ensuremath{Node_A},\ensuremath{\Conid{Leaf}}}
  \quad&=\quad \underbrace{1 \cdot p_{\ensuremath{\Conid{Leaf}}} \cdot
         p_{\ensuremath{Node_A}} + 1 \cdot p_{\ensuremath{\Conid{Leaf}}} \cdot
         p_{\ensuremath{Node_B}}}_\text{$One \ \mathrm{\ensuremath{\Conid{Leaf}}}$ as
         left-subtree}
         \nonumber \\
       &+\quad \underbrace{1 \cdot p_{\ensuremath{Node_A}} \cdot p_{\ensuremath{\Conid{Leaf}}} + 1
         \cdot p_{\ensuremath{Node_B}} \cdot p_{\ensuremath{\Conid{Leaf}}}}_\text{$One \
         \mathrm{\ensuremath{\Conid{Leaf}}}$ as right-subtree}
         \nonumber \\
       &+\quad \underbrace{2 \cdot p_{\ensuremath{\Conid{Leaf}}} \cdot
         p_{\ensuremath{\Conid{Leaf}}}}_\text{$\mathrm{\ensuremath{\Conid{Leaf}}}$ as left- and
         right-subtree}
         \nonumber \\
       &=\quad \ 2 \cdot p_{\ensuremath{\Conid{Leaf}}}
         \label{mNodeLeaf}
\end{align}
The rest of $M_C$ can be similarly computed, obtaining:
%
\begin{equation}
  M_C
  = \qquad\quad
  \mkern-5mu
  \begin{tikzpicture}[baseline=-0.65ex]
    \matrix[
    matrix of math nodes,
    column sep=0ex,
    left delimiter={[},
    right delimiter={]}
    ] (m)
    {
      0               & 0                & 0\\
      2 \cdot p_{\ensuremath{\Conid{Leaf}}} & 2 \cdot p_{\ensuremath{Node_A}} &2 \cdot p_{\ensuremath{Node_B}}\\
       p_{\ensuremath{\Conid{Leaf}}}  &  p_{\ensuremath{Node_A}} &  p_{\ensuremath{Node_B}}\\
    };
    \node[above,text depth=1pt] at (m-1-1.north) {$\scriptstyle \ensuremath{\Conid{Leaf}}$};
    \node[above,text depth=1pt] at (m-1-2.north) {$\scriptstyle \ensuremath{Node_A}$};
    \node[above,text depth=1pt] at (m-1-3.north) {$\scriptstyle \ensuremath{Node_B}$};
    \node[left,overlay] at ([xshift=-4.5ex]m-1-1.west) {$\scriptstyle \ensuremath{\Conid{Leaf}}$};
    \node[left,overlay] at ([xshift=-1.5ex]m-2-1.west) {$\scriptstyle \ensuremath{Node_A}$};
    \node[left,overlay] at ([xshift=-3.0ex]m-3-1.west) {$\scriptstyle \ensuremath{Node_B}$};
  \end{tikzpicture}\mkern-5mu
  \label{eq:treepmatrix}
\end{equation}
%
%
%
Note that the first row, corresponding to the \ensuremath{\Conid{Leaf}} constructor, is filled with
zeros.
This is because \ensuremath{\Conid{Leaf}} is a terminal constructor, i.e., it cannot generate
further subterms of any kind.%
\footnote{The careful reader may notice that there is a pattern in the mean
  matrix if inspected together with the definition of \ensuremath{\Conid{Tree'}}.
  We prove in Section \ref{sec:opt} that each $m_{ij}$ can be automatically
  calculated by simply exploiting type information.}
%
%
%
%

With the mean matrix in place, we define $E[G_0]$ (the initial vector of mean
probabilities) as $(p_{\ensuremath{\Conid{Leaf}}}, p_{\ensuremath{Node_A}}, p_{\ensuremath{Node_B}})$.
By applying (\ref{eq:total:multi}) with $E[G_0]$ and $M_C$, we can predict the
expected number of generated \emph{non-terminal} \ensuremath{Node_A} constructors (and
analogously \ensuremath{Node_B}) with a size parameter $n$ as follows:
%
\begin{align}
  E[\mathrm{\ensuremath{Node_A}}]
  \!&=\! \left(E[P_{n-1}]^T \right)\!.\mathrm{\ensuremath{Node_A}}
    \!=\! \left(E[G_0]^T\! \cdot\!
      \left( \frac{I\! -\! (M_C)^{n}}{I\! -\! M_C} \!\right)\! \right)\!.\mathrm{\ensuremath{Node_A}}
    \nonumber
\end{align}
Function $(\_).C$ simply projects the value corresponding to constructor $C$
from the population vector.
It is very important to note that the sum only includes the population up to
level ($n-1$).
This choice comes from the fact that our \quickcheck generator can choose
between only terminal constructors at the last generation level (recall that
\ensuremath{\Varid{gen}\;\mathrm{0}} generates only \ensuremath{\Conid{Leaf}}s in Figure \ref{fig:treep:dragen}).
%
%
As an example, if we assign our generation probabilities for \ensuremath{\Conid{Tree'}} as
$p_{Leaf} \mapsto 0.2$, $p_{\ensuremath{Node_A}}\mapsto 0.5$ and $p_{\ensuremath{Node_B}} \mapsto 0.3$,
then the formula predicts that our \quickcheck generator with a size parameter
of 10 will generate on average 21.322 \ensuremath{Node_A}s and 12.813 \ensuremath{Node_B}s.
This result can easily be verified by sampling a large number of values with a
generation size of 10, and then averaging the number of generated \ensuremath{Node_A}s and
\ensuremath{Node_B}s across the generated values.

In this section, we obtain a prediction of the expected number of non-terminal
constructors generated by \dragen generators.
To predict terminal constructors, however, requires a special treatment as
discussed in the next section.


%
%


\section{Terminal constructors}
\label{sec:terminals}

In this section we introduce the special treatment required to predict the
generated distribution of terminal constructors, i.e. constructors with no
recursive arguments.

Consider the generator in Figure \ref{fig:treep:dragen}.
It generates terminal constructors in two situations, i.e., in the definition of
\ensuremath{\Varid{gen}\;\mathrm{0}} and \ensuremath{\Varid{gen}\;\Varid{n}}.
In other words, the random process introduced by our generators can be
considered to be composed of two independent parts when it comes to terminal
constructors---refer to%
\ifbool{EXTENDED}{
  Appendix \ref{app:terminals}
}{
  the supplementary material
}%
for a graphical interpretation.
In principle, the number of terminal constructors generated by the stochastic
process described in \ensuremath{\Varid{gen}\;\Varid{n}} is captured by the multi-type branching process
formulas.
However, to predict the expected number of terminal constructors generated by
exercising \ensuremath{\Varid{gen}\;\mathrm{0}}, we need to separately consider a random process that
\emph{only generates terminal constructors} in order to terminate.
For this purpose, and assuming a maximum generation depth $n$, we need to
calculate the number of terminal constructors required to stop the generation
process at the recursive arguments of each non-terminal constructor at level
($n-1$).
In our \ensuremath{\Conid{Tree'}} example, this corresponds to two \ensuremath{\Conid{Leaf}}s for every \ensuremath{Node_A} and
one \ensuremath{\Conid{Leaf}} for every \ensuremath{Node_B} constructor at level ($n-1$).
\looseness=-1

Since both random processes are independent, to predict the overall expected
number of terminal constructors, we can simply add the expected number of
terminal constructors generated in each one of them.
%
%
%
%
%
%
%
%
%
%
%
%
Recalling our previous example, we obtain the following formula for \ensuremath{\Conid{Tree'}}
terminals as follows:
\begin{align*}
  E[\ensuremath{\Conid{Leaf}}]
  = \underbrace{\left( E[P_{n-1}]^T  \right)\!.
    \mathrm{\ensuremath{\Conid{Leaf}}}}_{\text{branching process}}
  &+ \underbrace{2\! \cdot\! \left( E[G_{n-1}]^T \right)\!.
    \mathrm{\ensuremath{Node_A}}}_\text{case (\mbox{\ensuremath{Node_A\;\Conid{Leaf}\;\Conid{Leaf}}})} \\
  &+ \underbrace{1\! \cdot\! \left( E[G_{n-1}]^T \right)\!.
    \mathrm{\ensuremath{Node_B}}}_\text{case (\mbox{\ensuremath{Node_B\;\Conid{Leaf}}})}
\end{align*}
The formula counts the \ensuremath{\Conid{Leaf}}s generated by the multi-type branching process up
to level ($n-1$) and adds the expected number of \ensuremath{\Conid{Leaf}}s generated at the last
level.
%

%
Although we can now predict the expected number of generated \ensuremath{\Conid{Tree'}}
constructors regardless of whether they are terminal or not, this approach only
works for data types with a single terminal constructor.
\begin{figure}[b] 
\vspace{-10pt}
\begin{framed}
\vspace{-10pt}
\begin{hscode}\SaveRestoreHook
\column{B}{@{}>{\hspre}l<{\hspost}@{}}%
\column{3}{@{}>{\hspre}l<{\hspost}@{}}%
\column{5}{@{}>{\hspre}l<{\hspost}@{}}%
\column{7}{@{}>{\hspre}c<{\hspost}@{}}%
\column{7E}{@{}l@{}}%
\column{10}{@{}>{\hspre}l<{\hspost}@{}}%
\column{18}{@{}>{\hspre}l<{\hspost}@{}}%
\column{31}{@{}>{\hspre}l<{\hspost}@{}}%
\column{32}{@{}>{\hspre}l<{\hspost}@{}}%
\column{E}{@{}>{\hspre}l<{\hspost}@{}}%
\>[B]{}\mathbf{instance}\;\Conid{Arbitrary}\;Tree'\!'\;\mathbf{where}{}\<[E]%
\\
\>[B]{}\hsindent{3}{}\<[3]%
\>[3]{}\Varid{arbitrary}\mathrel{=}\Varid{sized}\;\Varid{gen}\;\mathbf{where}{}\<[E]%
\\
\>[3]{}\hsindent{2}{}\<[5]%
\>[5]{}\Varid{gen}\;\mathrm{0}\mathrel{=}\Varid{chooseWith}{}\<[E]%
\\
\>[5]{}\hsindent{2}{}\<[7]%
\>[7]{}[\mskip1.5mu {}\<[7E]%
\>[10]{}(p_{LeafA}^*,\Varid{pure}\;Leaf_A),{}\<[31]%
\>[31]{}(p_{LeafB}^*,\Varid{pure}\;Leaf_B)\mskip1.5mu]{}\<[E]%
\\
\>[3]{}\hsindent{2}{}\<[5]%
\>[5]{}\Varid{gen}\;\Varid{n}\mathrel{=}\Varid{chooseWith}{}\<[E]%
\\
\>[5]{}\hsindent{2}{}\<[7]%
\>[7]{}[\mskip1.5mu {}\<[7E]%
\>[10]{}(p_{LeafA},{}\<[18]%
\>[18]{}\Varid{pure}\;Leaf_A),{}\<[32]%
\>[32]{}(p_{LeafB},\Varid{pure}\;Leaf_B){}\<[E]%
\\
\>[5]{}\hsindent{2}{}\<[7]%
\>[7]{},{}\<[7E]%
\>[10]{}(p_{NodeA},{}\<[18]%
\>[18]{}Node_A\mathop{\langle \texttt{\$} \rangle}\Varid{gen}\;(n{-}1)\mathop{\langle \ast \rangle}\Varid{gen}\;(n{-}1)){}\<[E]%
\\
\>[5]{}\hsindent{2}{}\<[7]%
\>[7]{},{}\<[7E]%
\>[10]{}(p_{NodeB},{}\<[18]%
\>[18]{}Node_B\mathop{\langle \texttt{\$} \rangle}\Varid{gen}\;(n{-}1))\mskip1.5mu]{}\<[E]%
\ColumnHook
\end{hscode}\resethooks
\vspace{-15pt}
\caption{\label{fig:treepp} Derived generator for \ensuremath{Tree'\!'}}
\vspace{-5pt}
\end{framed}
\end{figure}
%
%
%
%
%
If we have a data type with multiple terminal constructors, we have to consider
the probabilities of choosing each one of them when filling the recursive
arguments of non-terminal constructors at the previous level.
For instance, consider the following ADT:
\begin{hscode}\SaveRestoreHook
\column{B}{@{}>{\hspre}l<{\hspost}@{}}%
\column{3}{@{}>{\hspre}l<{\hspost}@{}}%
\column{E}{@{}>{\hspre}l<{\hspost}@{}}%
\>[3]{}\mathbf{data}\;Tree'\!'\mathrel{=}Leaf_A\;\!\vert\!\;Leaf_B\;\!\vert\!\;Node_A\;Tree'\!'\;Tree'\!'\;\!\vert\!\;Node_B\;Tree'\!'{}\<[E]%
\ColumnHook
\end{hscode}\resethooks
%
%

Figure \ref{fig:treepp} shows the corresponding \dragen generator for \ensuremath{Tree'\!'}.
%
%
Note there are two sets of probabilities to choose terminal nodes, one for each
random process.
The \ensuremath{p_{LeafA}^*} and \ensuremath{p_{LeafB}^*} probabilities are used to choose between terminal
constructors at the last generation level.
These probabilities preserve the same proportion as their non-starred versions,
i.e., they are normalized to form a probability distribution:
%
%
\begin{align*}
  p_{\ensuremath{Leaf_A}}^* = \frac{p_{\ensuremath{Leaf_A}}}{p_{\ensuremath{Leaf_A}} + p_{\ensuremath{Leaf_B}}}
  \hspace{20pt}
  p_{\ensuremath{Leaf_B}}^* = \frac{p_{\ensuremath{Leaf_B}}}{p_{\ensuremath{Leaf_A}} + p_{\ensuremath{Leaf_B}}}
\end{align*}
In this manner, we can use the same generation probabilities for terminal
%
%
constructors in both random processes---therefore reducing the complexity of our
prediction engine implementation (described in Section
\ref{sec:implementation}).
%
%

To compute the overall expected number of terminals, we need to predict the
expected number of terminal constructors at the last generation level which
could be descendants of non-terminal constructors at level $(n-1)$.
More precisely:
%
%
%
\begin{align*}
  E[\ensuremath{Leaf_A}]
  = \underbrace{
    \left( E[P_{n-1}]^T \right)\!.\mathrm{\ensuremath{Leaf_A}}
    }_{\text{branching process}}
  &+ \underbrace{
    2\! \cdot\! p_{\ensuremath{Leaf_A}}^*\! \cdot\! \left( E[G_{n-1}]^T \right)\!.\mathrm{\ensuremath{Node_A}}\
    }_{\text{expected leaves to fill $\ensuremath{Node_A}s$}}\\
  &+ \underbrace{
    1\! \cdot\! p_{\ensuremath{Leaf_A}}^*\! \cdot\! \left( E[G_{n-1}]^T \right)\!.\mathrm{\ensuremath{Node_B}}
    }_{\text{expected leaves to fill $\ensuremath{Node_B}s$}}
\end{align*}
where the case of $E[\ensuremath{Leaf_B}]$ follows analogously.
%
%


\section{Mutually-recursive and composite ADTs}
\label{sec:opt}

%
In this section, we introduce some extensions to our model that allow us to
derive \dragen generators for data types found in existing off-the-shelf Haskell
libraries.
We start by showing how multi-type branching processes
%
%
naturally extend to mutually-recursive ADTs.
Consider the mutually recursive ADTs \ensuremath{T_{1}} and \ensuremath{T_{2}} with their automatically
derived generators shown in Figure \ref{fig:t1t2}.
%
%
%
%
%
%
%
%
%
\begin{figure}[t]
  \begin{framed}
    \vspace{-10pt}
    \begin{hscode}\SaveRestoreHook
\column{B}{@{}>{\hspre}l<{\hspost}@{}}%
\column{5}{@{}>{\hspre}l<{\hspost}@{}}%
\column{7}{@{}>{\hspre}l<{\hspost}@{}}%
\column{9}{@{}>{\hspre}l<{\hspost}@{}}%
\column{11}{@{}>{\hspre}l<{\hspost}@{}}%
\column{14}{@{}>{\hspre}l<{\hspost}@{}}%
\column{E}{@{}>{\hspre}l<{\hspost}@{}}%
\>[5]{}\mathbf{data}\;T_{1}\mathrel{=}\Conid{A}\mid \Conid{B}\;T_{1}\;T_{2}{}\<[E]%
\\
\>[5]{}\mathbf{data}\;T_{2}\mathrel{=}\Conid{C}\mid \Conid{D}\;T_{1}{}\<[E]%
\\[\blanklineskip]%
\>[5]{}\mathbf{instance}\;\Conid{Arbitrary}\;T_{1}\;\mathbf{where}{}\<[E]%
\\
\>[5]{}\hsindent{2}{}\<[7]%
\>[7]{}\Varid{arbitrary}\mathrel{=}\Varid{sized}\;\Varid{gen}\;\mathbf{where}{}\<[E]%
\\
\>[7]{}\hsindent{2}{}\<[9]%
\>[9]{}\Varid{gen}\;\mathrm{0}\mathrel{=}\Varid{pure}\;\Conid{A}{}\<[E]%
\\
\>[7]{}\hsindent{2}{}\<[9]%
\>[9]{}\Varid{gen}\;\Varid{n}\mathrel{=}\Varid{chooseWith}{}\<[E]%
\\
\>[9]{}\hsindent{2}{}\<[11]%
\>[11]{}[\mskip1.5mu {}\<[14]%
\>[14]{}(p_{A},\Varid{pure}\;\Conid{A}){}\<[E]%
\\
\>[9]{}\hsindent{2}{}\<[11]%
\>[11]{},{}\<[14]%
\>[14]{}(p_{B},\Conid{B}\mathop{\langle \texttt{\$} \rangle}\Varid{gen}\;(n{-}1)\mathop{\langle \ast \rangle}\Varid{resize}\;(n{-}1)\;\Varid{arbitrary})\mskip1.5mu]{}\<[E]%
\\[\blanklineskip]%
\>[5]{}\mathbf{instance}\;\Conid{Arbitrary}\;T_{2}\;\mathbf{where}{}\<[E]%
\\
\>[5]{}\hsindent{2}{}\<[7]%
\>[7]{}\Varid{arbitrary}\mathrel{=}\Varid{sized}\;\Varid{gen}\;\mathbf{where}{}\<[E]%
\\
\>[7]{}\hsindent{2}{}\<[9]%
\>[9]{}\Varid{gen}\;\mathrm{0}\mathrel{=}\Varid{pure}\;\Conid{C}{}\<[E]%
\\
\>[7]{}\hsindent{2}{}\<[9]%
\>[9]{}\Varid{gen}\;\Varid{n}\mathrel{=}\Varid{chooseWith}{}\<[E]%
\\
\>[9]{}\hsindent{2}{}\<[11]%
\>[11]{}[\mskip1.5mu (p_{C},\Varid{pure}\;\Conid{C}),(p_{D},\Conid{D}\mathop{\langle \texttt{\$} \rangle}\Varid{resize}\;(n{-}1)\;\Varid{arbitrary})\mskip1.5mu]{}\<[E]%
\ColumnHook
\end{hscode}\resethooks
  \vspace{-15pt}
  \caption{\label{fig:t1t2} Mutually recursive types \ensuremath{T_{1}} and \ensuremath{T_{2}} and their
    \dragen generators.}
  \vspace{-5pt}
\end{framed}
\vspace{-10pt}
\end{figure}
%
%
Note the use of the \quickcheck's function \ensuremath{\Varid{resize}\mathbin{::}\Conid{Int}\to \Conid{Gen}\;\Varid{a}\to \Conid{Gen}\;\Varid{a}},
which resets the generation size of a given generator to a new value.
We use it to decrement the generation size at the recursive calls of \ensuremath{\Varid{arbitrary}}
that generate subterms of a mutually recursive data type.
%
%

\begin{figure}[b]
  \vspace{-5pt}
  \begin{framed}
  \vspace{-5pt}
  \begin{center}
    \begin{tikzpicture}
      [ level/.style={sibling distance=23mm/#1
      , level distance=12.5mm} ]
      \node [circle,draw] (B){\ensuremath{\Conid{B}}}
      child {node [rectangle,draw, xshift=25pt, yshift=30pt] (ac) {\ensuremath{\Conid{B}\;\textcolor{red}{A}\;\textcolor{red}{C}}}
        edge from parent[draw=none] node[left, yshift=10pt, xshift=25pt]
        {$p_A \cdot p_C\ \ $}}
      child {node [rectangle,draw] (ad) {\ensuremath{\Conid{B}\;\textcolor{red}{A}\;(\textcolor{red}{D}\;\cdots)}}
        edge from parent[draw=none] node[left, yshift=0pt, xshift=0pt]
        {$\ p_A \cdot p_D$}}
      child {node [rectangle,draw] (bc) {\ensuremath{\Conid{B}\;(\textcolor{red}{B}\;\cdots)\;\textcolor{red}{C}}}
        edge from parent[draw=none] node[right, yshift=0pt, xshift=-3pt]
        {$\ \ p_B \cdot p_C$}}
      child {node [rectangle,draw, yshift=30pt] (bd) {\ensuremath{\Conid{B}\;(\textcolor{red}{B}\;\cdots)\;(\textcolor{red}{D}\;\cdots)}}
        edge from parent[draw=none] node[right, yshift=10pt, xshift=-40pt]
        {$\ \ \ \ p_B \cdot p_D$}};
      \draw[transform canvas={xshift=0em}, ->] (B) -- (ac);
      \draw[transform canvas={xshift=0em}, ->] (B) -- (ad);
      \draw[transform canvas={xshift=0em}, ->] (B) -- (bc);
      \draw[transform canvas={xshift=0em}, ->] (B) -- (bd);
    \end{tikzpicture}
  \end{center}
  \vspace{-5pt}
  \caption{\label{fig:BandD} Possible offspring of constructor \ensuremath{\Conid{B}}.} 
  \vspace{-5pt}
  \end{framed}
\end{figure}
The key observation is that we can ignore that \ensuremath{\Conid{A}}, \ensuremath{\Conid{B}}, \ensuremath{\Conid{C}} and \ensuremath{\Conid{D}} are
\emph{constructors belonging to different data types} and just consider each of
them as a kind of offspring on its own.
Figure \ref{fig:BandD} visualizes the possible offspring generated by the
non-terminal constructor \ensuremath{\Conid{B}} (belonging to \ensuremath{T_{1}}) with the corresponding
probabilities as labeled edges.
Following the figure, we obtain the expected number of \ensuremath{\Conid{D}}s generated by \ensuremath{\Conid{B}}
constructors as follows:
\begin{align*}
  m_{BD}
  = 1 \cdot p_A \cdot p_D + 1 \cdot p_B \cdot p_D
  = p_D \cdot (p_A + p_B) = p_D
\end{align*}
Doing similar calculations, we obtain the mean matrix $M_C$ for \ensuremath{\Conid{A}}, \ensuremath{\Conid{B}}, \ensuremath{\Conid{C}},
and \ensuremath{\Conid{D}} as follows:
%
\begin{equation}
  M_C
  = \quad
  \mkern-5mu
  \begin{tikzpicture}[baseline=-0.65ex]
    \matrix[
    matrix of math nodes,
    column sep=.2ex,
    left delimiter={[},
    right delimiter={]}
    ] (m)
    {
      0   & 0   & 0   & 0 \\
      p_A & p_B & p_C & p_D \\
      0   & 0   & 0   & 0 \\
      p_A & p_B & 0   & 0 \\
    };
    \draw[dashed, ultra thin]
    ([xshift=4.8ex]m-1-1.north east) -- ([xshift=4.2ex]m-4-1.south east);
    \draw[dashed, ultra thin]
    (m-2-1.south west) -- (m-2-4.south east);
    \node[above,text depth=1pt] at (m-1-1.north) {$\scriptstyle A$};
    \node[above,text depth=1pt] at (m-1-2.north) {$\scriptstyle B$};
    \node[above,text depth=1pt] at (m-1-3.north) {$\scriptstyle C$};
    \node[above,text depth=1pt] at (m-1-4.north) {$\scriptstyle D$};
    \node[left,overlay] at ([xshift=-2.3ex]m-1-1.west) {$\scriptstyle A$};
    \node[left,overlay] at ([xshift=-1.6ex]m-2-1.west) {$\scriptstyle B$};
    \node[left,overlay] at ([xshift=-2.3ex]m-3-1.west) {$\scriptstyle C$};
    \node[left,overlay] at ([xshift=-1.6ex]m-4-1.west) {$\scriptstyle D$};
  \end{tikzpicture}\mkern-5mu
  \label{eq:t1t2matrix}
\end{equation}
%

%
We define the mean of the initial generation as $E[G_0] = (p_A, p_B, 0, 0)$---we
assing $p_C = p_D = 0$ since we choose to start by generating a value of type
\ensuremath{T_{1}}.
With $M_C$ and $E[G_0]$ in place, we can apply the equations explained through
Section \ref{sec:bp2} to predict the expected number of \ensuremath{\Conid{A}}, \ensuremath{\Conid{B}}, \ensuremath{\Conid{C}} and \ensuremath{\Conid{D}}
constructors.
%

While this approach works, it completely ignores the types \ensuremath{T_{1}} and \ensuremath{T_{2}} when
calculating $M_C$!
For a large set of mutually-recursive data types involving a large number of
constructors, handling $M_C$ like this results in a high computational cost.
We show next how we cannot only shrink this mean matrix of constructors but also
compute it automatically by making use of data type definitions.

\paragraph{Mean matrix of types}
%
%
If we analyze the mean matrices of \ensuremath{\Conid{Tree'}} (\ref{eq:treepmatrix}) and the
mutually-recursive types \ensuremath{T_{1}} and \ensuremath{T_{2}} (\ref{eq:t1t2matrix}),
%
%
it seems that determining the expected number of offspring generated by a
non-terminal constructor requires us to \emph{count the number of occurrences in
  the ADT which the offspring belongs~to}.
For instance, $m_{NodeA,Leaf}$
%
%
is $2\cdot p_{Leaf}$ (\ref{mNodeLeaf}), where $2$ is the number of occurrences of
\ensuremath{\Conid{Tree'}} in the declaration of \ensuremath{Node_A}. 
Similarly, $m_{BD}$
%
%
is $1 \cdot p_D$, where $1$ is the number of occurrences of \ensuremath{T_{2}} in the
declaration of \ensuremath{\Conid{B}}. 
This observation means that instead of dealing with constructors, we could
directly deal with types!

We can think about a branching process as generating ``place holders'' for
constructors, where place holders can only be populated by constructors of a
certain type.

\begin{figure}[b]
  \vspace{-10pt}
  \begin{framed}
    \begin{tikzpicture}
      [ level/.style={level distance=1.3cm, sibling distance=20mm/#1}]
      \node [circle,draw] (Tree2N) {\ensuremath{Tree'}}
      child {node [rectangle,draw] (Tree2N2) {\ensuremath{Tree'}\ \ \ensuremath{Tree'}}
        edge from parent[draw=none]
          node[left, xshift=-5pt, yshift=5pt] {$p_{NodeA}$}
      }
      child {node [rectangle,draw] (Tree2N3) {\ensuremath{Tree'}}
        edge from parent[draw=none, xshift=10pt]
          node[right, xshift=5pt, yshift=5pt] {$p_{NodeB}$}
      };
      \draw[->] (Tree2N) -- (Tree2N2);
      \draw[->] (Tree2N) -- (Tree2N3);
      \draw[dashed] (Tree2N2.north) -- (Tree2N2.south);
    \end{tikzpicture}
    \hspace{20pt}
    \begin{tikzpicture}
      [ level/.style={level distance=1.2cm, sibling distance=10mm/#1}]
      \node [circle,draw] (T1) {\ensuremath{T_{1}}}
      child {node [rectangle,draw] (T1T2) {\ensuremath{T_{1}}\ \ \ensuremath{T_{2}}}
        edge from parent[->] node[left] {$\ p_{B}$}
      };
      \draw[dashed] (T1T2.north) -- (T1T2.south);
      \node [right = 0.4cm of T1, circle,draw] (T2){\ensuremath{T_{2}}}
      child {node [rectangle,draw] (T1) {\ensuremath{T_{1}}}
        edge from parent[->] node[right] {$\ p_{D}$}
      };
      \draw[dashed, gray] (-0.75,-1.75) rectangle (1.9,0.65);
    \end{tikzpicture}
  \caption{\label{fig:mean:T} Offspring as types}
  \vspace{-5pt}
\end{framed}
\end{figure}

%
Figure \ref{fig:mean:T} illustrates offspring as types for the definitions
\ensuremath{T_{1}}, \ensuremath{T_{2}}, and \ensuremath{\Conid{Tree'}}. 
A place holder of type \ensuremath{T_{1}} can generate a place holder for type \ensuremath{T_{1}}
and a place holder for type \ensuremath{T_{2}}.
A place holder of type \ensuremath{T_{2}} can generate a place holder of type \ensuremath{T_{1}}.
A place holder of type \ensuremath{\Conid{Tree'}} can generate two place holders of type \ensuremath{\Conid{Tree'}}
when generating \ensuremath{Node_A}, one place holder when generating \ensuremath{Node_B}, or zero place
holders when generating a \ensuremath{\Conid{Leaf}} (this last case is not shown in the figure since it
is void).
With these considerations, the mean matrices of types for \ensuremath{\Conid{Tree'}}, written
$M_{Tree'}$; and types \ensuremath{T_{1}} and \ensuremath{T_{2}}, written $M_{T_1T_2}$ are defined as
follows:
\vspace{-5pt}
\begin{equation}
  M_{Tree'}\!
  =\qquad\;
  \mkern-5mu
  \begin{tikzpicture}[
    baseline=-0.65ex,
    every left delimiter/.style={xshift=.5em},
    every right delimiter/.style={xshift=-.5em},
    ]
    \matrix[
    matrix of math nodes,
    inner sep=0pt, 
    nodes={inner sep=.3333em}, 
    left delimiter={[},
    right delimiter={]}
    ] (m)
    {
      2 \cdot p_{NodeA} + p_{NodeB}\\
    };
    \node[above,text depth=1pt] at (m-1-1.north) {$\scriptstyle Tree'$};
    \node[left,overlay] at ([xshift=0.3em]m-1-1.west) {$\scriptstyle Tree'$};
  \end{tikzpicture}\mkern-5mu
  \hspace{20pt}
  M_{T_1T_2}\!
  %
  =\quad\;
  \mkern-5mu
  \begin{tikzpicture}[
    baseline=-0.65ex,
    every left delimiter/.style={xshift=.5em},
    every right delimiter/.style={xshift=-.5em},
    ]
    \matrix[
    matrix of math nodes,
    inner sep=0pt, 
    nodes={inner sep=.3333em}, 
    left delimiter={[},
    right delimiter={]}
    ] (m)
    {
      p_B & p_B\\
      p_D & 0\\
    };
    \node[above,text depth=1pt] at (m-1-1.north) {$\scriptstyle T1$};
    \node[above,text depth=1pt] at (m-1-2.north) {$\scriptstyle T2$};
    \node[left,overlay] at ([xshift=0pt]m-1-1.west) {$\scriptstyle T1$};
    \node[left,overlay] at ([xshift=0pt]m-2-1.west) {$\scriptstyle T2$};
  \end{tikzpicture}\mkern-5mu
  \nonumber
\end{equation}
%
Note how $M_{Tree'}$ shows that
%
the mean matrices of types might reduce a multi-type branching process to a
simple-type one.
%
%

Having the type matrix in place, we can use the following equation (formally
stated and proved in the%
\ifbool{EXTENDED}{
  Appendix \ref{sec:appendixA}%
}{
  supplementary material%
})
to soundly predict the expected number of constructors of a given set of
(possibly) mutually recursive types:
%
%
%
\begin{equation}
(E[G_n^C]).C^t_i = (E[G_n^T]).T_t \cdot p_{C^t_i} \tag{$\forall n \ge 0$}
\end{equation}
%
%
%
Where $G_n^C$ and $G_n^T$ denotes the $n$th-generations of constructors and type
place holders respectively.
$C^t_i$ represents the $i$th-constructor of the type $T_t$.
The equation establishes that, the expected number of constructors $C^t_i$ at
generation $n$ consists of the expected number of type place holders of its type
(i.e., $T_t$) at generation $n$ times the probability of generating that
constructor.
This equation allows us to simplify many of our calculations above by simply
using the mean matrix for types instead of the mean matrix for constructors.

\subsection{Composite types}
In this subsection, we extend our approach in a \emph{modular} manner to deal with
composite ADTs, i.e., ADTs which use already defined types in their
constructors' arguments and which are not involved in the branching process.
%
%
%
%
%
We start by considering the ADT \ensuremath{\Conid{Tree}} modified to carry booleans at the leaves:
%
%
\begin{hscode}\SaveRestoreHook
\column{B}{@{}>{\hspre}l<{\hspost}@{}}%
\column{E}{@{}>{\hspre}l<{\hspost}@{}}%
\>[B]{}\mathbf{data}\;\Conid{Tree}\mathrel{=}Leaf_A\;\Conid{Bool}\mid Leaf_B\;\Conid{Bool}\;\Conid{Bool}\mid \cdots\ {}\<[E]%
\ColumnHook
\end{hscode}\resethooks
Where \ensuremath{\cdots\ } denotes the constructors that remain unmodified.
To predict the expected number of \ensuremath{\Conid{True}} (and analogously of \ensuremath{\Conid{False}})
%
%
constructors, we calculate the multi-type branching process for \ensuremath{\Conid{Tree}} and
multiply each expected number of leaves by the number of arguments of type
\ensuremath{\Conid{Bool}} present in each one:
%
%
%
\begin{align*}
  E[\ensuremath{\Conid{True}}]
  = p_{True}
  \cdot (
  \underbrace{1 \cdot E[\ensuremath{Leaf_A}]}_\text{\text{case}\ \ensuremath{Leaf_A}} +
  \underbrace{2 \cdot E[\ensuremath{Leaf_B}]}_\text{\text{case}\ \ensuremath{Leaf_B}} )
\end{align*}
In this case, \ensuremath{\Conid{Bool}} is a ground type like \ensuremath{\Conid{Int}}, \ensuremath{\Conid{Float}}, etc.
Predictions become more interesting when considering richer composite types
involving, for instance, instantiations of polymorphic types.
%
%
To illustrate this point, consider a modified version of \ensuremath{\Conid{Tree}} where \ensuremath{Leaf_A}
now carries a value of type \ensuremath{\Conid{Maybe}\;\Conid{Bool}}:
%
%
\begin{hscode}\SaveRestoreHook
\column{B}{@{}>{\hspre}l<{\hspost}@{}}%
\column{E}{@{}>{\hspre}l<{\hspost}@{}}%
\>[B]{}\mathbf{data}\;\Conid{Tree}\mathrel{=}Leaf_A\;(\Conid{Maybe}\;\Conid{Bool})\mid Leaf_B\;\Conid{Bool}\;\Conid{Bool}\mid \cdots\ {}\<[E]%
\ColumnHook
\end{hscode}\resethooks
%
%
In order to calculate the expected number of \ensuremath{\Conid{True}}s, now we need to consider
the cases that a value of type \ensuremath{\Conid{Maybe}\;\Conid{Bool}} \emph{actually carries a boolean
  value}, i.e., when a \ensuremath{\Conid{Just}} constructor gets generated:
\begin{align*}
  E[\ensuremath{\Conid{True}}]
  = p_{True} \cdot (1 \cdot E[\ensuremath{Leaf_A}] \cdot p_{Just} + 2 \cdot E[\ensuremath{Leaf_B}])
\end{align*}

\begin{wrapfigure}{r}{0.25\textwidth}
  \vspace{-13pt}
  \begin{framed}
    \resizebox {4cm} {!} {
      \hspace{-8pt}
    \begin{tikzpicture}
      [level/.style={sibling distance=15mm/#1, level distance=12mm}]
      \node [rectangle,draw] (LeafC){\ensuremath{Leaf_A}}
      child {
        node [rectangle,draw] (Just) {\ensuremath{\Conid{Just}}}
        child {
          node [rectangle,draw,left] (True) {\ensuremath{\Conid{True}}}
          edge from parent[draw=none] node[left] {$p_{True}$}}
        child {
          node [rectangle,draw,right] (False) {\ensuremath{\Conid{False}}}
          edge from parent[draw=none] node[right] {$p_{False}$}}
        edge from parent[draw=none] node[left] {$p_{Just}$}
      }
      child {
        node [rectangle,draw] (Nothing) {\ensuremath{\Conid{Nothing}}}
        edge from parent[draw=none] node[right] {$\ p_{Nothing}$}
      };
      \path[->] (LeafC) edge node {} (Just);
      \path[->] (LeafC) edge node {} (Nothing);
      \path[->] (Just) edge node {} (True);
      \path[->] (Just) edge node {} (False);
    \end{tikzpicture}
    \hspace{5pt}
  }
  \vspace{-15pt}
  \caption{\label{fig:graphdeps} Constructor dependency graph.}
  \vspace{-5pt}
  \end{framed}
  \vspace{-10pt}
\end{wrapfigure}
In the general case, for constructor arguments utilizing other ADTs, it is
necessary to know the chain of constructors required to generate ``foreign''
values---in our example, a \ensuremath{\Conid{True}} value gets generated if a \ensuremath{Leaf_A} gets
generated with a \ensuremath{\Conid{Just}} constructor ``in between.''
To obtain such information, we create of a \emph{constructor dependency graph}
(CDG), that is, a directed graph where each node represents a constructor and
each edge represents its dependency. Each edge is labeled with its corresponding
generation probability.
Figure \ref{fig:graphdeps} shows the CDG for \ensuremath{\Conid{Tree}} starting from the \ensuremath{Leaf_A}
constructor.
Having this graph together with the application of the multi-type branching
process, we can predict the expected number of constructors belonging to
external ADTs.
It is enough to multiply the probabilities at each edge of the path between
every constructor involved in the branching process and the desired
external~constructor.

The extensions described so far enable our tool (presented in the next section)
to make predictions about \quickcheck generators for ADTs defined in many
existing Haskell libraries.



\section{Implementation}
\label{sec:implementation}

%
%
%
\dragen is a tool chain written in Haskell that implements the multi-type
branching processes (Section \ref{sec:bp2} and \ref{sec:terminals}) and its
extensions (Section \ref{sec:opt}) together with a distribution optimizer, which
calibrates the probabilities involved in generators to fit developers' demands.
%
%
\dragen synthesizes generators by calling the Template Haskell function
\ensuremath{\Varid{dragenArbitrary}\mathbin{::}\Conid{Name}\to \Conid{Size}\to \Conid{CostFunction}\to \Conid{Q}\;[\mskip1.5mu \Conid{Dec}\mskip1.5mu]}, where developers
indicate the target ADT for which they want to obtain a \quickcheck generator;
the desired generation size, needed by our prediction mechanism in order to
calculate the distribution at the last generation level; and a \emph{cost
  function} encoding the desired generation distribution.

The design decision to use a probability optimizer rather than search for an
analytical solution is driven by two important aspects of the problem we aim to
solve.
%
Firstly, the computational cost of exactly solving a non-linear system of
equations (such as those arising from branching processes) can be prohibitively
high when dealing with a large number of constructors, thus a large number of
unknowns to be solved for.
Secondly, the existence of such exact solutions is not guaranteed due to the
implicit invariants the data types under consideration might have.
%
%
In such cases, we believe it is much more useful to construct a distribution
that approximates the user's goal, than to abort the entire compilation process.
We give an example of this approximate solution finding behavior later in this
section.
%
%

%
%

\vspace{5pt}
\subsection{Cost functions}
The optimization process is guided by a user-provided cost function.
%
%
In our setting, a cost function assigns a real number (a cost) to the
combination of a generation size (chosen by the user) and a mapping from
constructors to probabilities:
\begin{hscode}\SaveRestoreHook
\column{B}{@{}>{\hspre}l<{\hspost}@{}}%
\column{3}{@{}>{\hspre}l<{\hspost}@{}}%
\column{E}{@{}>{\hspre}l<{\hspost}@{}}%
\>[3]{}\mathbf{type}\;\Conid{CostFunction}\mathrel{=}\Conid{Size}\to \Conid{ProbMap}\to \Conid{Double}{}\<[E]%
\ColumnHook
\end{hscode}\resethooks
%
%
%
Type \ensuremath{\Conid{ProbMap}} encodes the mapping from constructor names to real numbers.
Our optimization algorithm works by generating several \ensuremath{\Conid{ProbMap}} candidates that
are evaluated through the provided cost function in order to choose the most
suitable one.
Cost functions are expected to return a smaller positive number as the predicted
distribution obtained from its parameters gets closer to a certain \emph{target
  distribution}, which depends on what property that particular cost function is
intended to encode.
Then, the optimizator simply finds the best \ensuremath{\Conid{ProbMap}} by minimizing the provided
cost function.

Currently, our tool provides a basic set of cost functions to easily describe
the expected distribution of the derived generator.
%
%
%
%
For instance, \ensuremath{\Varid{uniform}\mathbin{::}\Conid{CostFunction}} encodes constuctor-wise uniform
generation, an interesting property that naturally arises from our generation
process formalization.
%
%
It guides the optimization process to a generation distribution that minimizes
the difference between the expected number of each generated constructor and the
generation size.
%
%
Moreover, the user can restrict the generation distribution to a certain subset
of constructors using the cost functions \ensuremath{\Varid{only}\mathbin{::}[\mskip1.5mu \Conid{Name}\mskip1.5mu]\to \Conid{CostFunction}} and
\ensuremath{\Varid{without}\mathbin{::}[\mskip1.5mu \Conid{Name}\mskip1.5mu]\to \Conid{CostFunction}} to describe these restrictions.
In this case, the whitelisted constructors are then generated following the
\ensuremath{\Varid{uniform}} behavior.
Similarly, if the branching process involves mutually recursive data types, the
user could restrict the generation to a certain subset of data types by using
the functions \ensuremath{\Varid{onlyTypes}} and \ensuremath{\Varid{withoutTypes}}.
Additionally, when the user wants to generate constructors according to certain
proportions, \ensuremath{\Varid{weighted}\mathbin{::}[\mskip1.5mu (\Conid{Name},\Conid{Int})\mskip1.5mu]\to \Conid{CostFunction}} allows to encode this
property, e.g. three times more \ensuremath{Leaf_A}'s than \ensuremath{Leaf_B}'s.
%
%
%

\begin{table*}[t]
  \caption{Predicted and actual distributions for \ensuremath{\Conid{Tree}} generators using
    different cost functions.}
  \vspace{-5pt}
  \begin{center}
    \begin{tabular}{l|c c c c|c c c c}
      \toprule 
      \textbf{Cost Function}
      & \multicolumn{4}{c|}{\textbf{Predicted Expectation}}
      & \multicolumn{4}{c}{\textbf{Observed Expectation}} \\
      & \ensuremath{Leaf_A} & \ensuremath{Leaf_B} & \ensuremath{Leaf_C} & \ensuremath{\Conid{Node}}
      & \ensuremath{Leaf_A} & \ensuremath{Leaf_B} & \ensuremath{Leaf_C} & \ensuremath{\Conid{Node}} \\
      \midrule 
      \ensuremath{\Varid{uniform}}
      \hfill & 5.26 & 5.26 & 5.21 & 14.73 & 5.27 & 5.26 & 5.21 & 14.74 \\
      \ensuremath{\Varid{weighted}\;[\mskip1.5mu (\textquotesingle Leaf_A,\mathrm{3}),(\textquotesingle Leaf_B,\mathrm{1}),(\textquotesingle Leaf_C,\mathrm{1})\mskip1.5mu]}
      \hfill & 30.07 & 9.76 & 10.15 & 48.96 & 30.06 & 9.75 & 10.16 & 48.98 \\
      \ensuremath{\Varid{weighted}\;[\mskip1.5mu (\textquotesingle Leaf_A,\mathrm{1}),(\textquotesingle Node,\mathrm{3})\mskip1.5mu]}
      \hfill & 10.07 & 3.15 & 17.57 & 29.80 & 10.08 & 3.15 & 17.58 & 29.82 \\
      \ensuremath{\Varid{only}\;[\mskip1.5mu \textquotesingle Leaf_A,\textquotesingle Node\mskip1.5mu]}
      \hfill & 10.41 & 0 & 0 & 9.41 & 10.43 & 0 & 0 & 9.43 \\
      \ensuremath{\Varid{without}\;[\mskip1.5mu \textquotesingle Leaf_C\mskip1.5mu]}
      \hfill & 6.95 & 6.95 & 0 & 12.91 & 6.93 & 6.92 & 0 & 12.86 \\
      \bottomrule 
    \end{tabular}
  \end{center}
  \label{tab:costfunctions}
  \vspace{-5pt}
\end{table*}
%
%

Table \ref{tab:costfunctions} shows the number of expected and observed
constructors of
%
%
different \ensuremath{\Conid{Tree}} generators obtained by using different cost functions.
The observed expectations were calculated averaging the number of constructors
across 100000 generated values.
%
Firstly, note how the generated distributions are soundly predicted by our tool.
In our tests, the small differences between predictions and actual values
dissapear as we increase the number of generated values.
As for the cost functions' behavior, there are some interesting aspects to note.
For instance, in the \ensuremath{\Varid{uniform}} case the optimizer cannot do anything to break
the implicit invariant of the data type:
every binary tree with $n$ nodes has $n+1$ leaves.
Instead, it converges to a solution that ``approximates'' a uniform distribution
around the generation size parameter.
We believe this is desirable behavior, to find an approximate solution when
certain invariants prevent the optimization process from finding an exact
solution.
This way the user does not have to be aware of the possible invariants that the
target data type may have, obtaining a solution that is good enough for most
purposes.
On the other hand, notice that in the \ensuremath{\Varid{weighted}} case at the second row of Table
\ref{tab:costfunctions}, the expected number of generated \ensuremath{\Conid{Node}}s is
considerably large.
This constructor is not listed in the proportions list, hence the optimizer can
freely adjust its probability to satisfy the proportions specified for the
leaves.
%

\subsection{Derivation Process}
%
%
\dragen's derivation process starts at compile-time with a type reification
stage that extracts information about the structure of the types under
consideration.
%
%
It follows an intermediate stage composed of the optimizer for probabilities
used in generators, which is guided by our multi-type branching process model,
parametrized on the cost function provided.
%
%
This optimizer is based on a standard local-search optimization algorithm that
recursively chooses the best mapping from constructors to probabilities in the
current neighborhood.
Neighbors are \ensuremath{\Conid{ProbMap}}s, determined by individually varying the probabilities
for \emph{each constructor} with a predetermined $\Delta$.
%
%
%
Then, to determine the ``best'' probabilities, the local-search applies our
prediction mechanishm to the immediate neighbors that have not yet been visited
by evaluating the cost function to select the most suitable next candidate.
This process continues until a local minimum is reached when there are no new
neighbors to evaluate, or if each step improvement is lower than a minimum
predetermined $\varepsilon$.
%

The final stage synthesizes a \ensuremath{\Conid{Arbitrary}} type-class instance for the target
data types using the optimized generation probabilities.
For this stage, we extend some functionality present in \megadeth in order to
derive generators parametrized by our previously optimized probabilities.
Refer to%
\ifbool{EXTENDED}{
  Appendix \ref{app:implementation}%
}{
  the supplementary material%
}
for further details on the cost functions and algorithms addressed by this
section.

\section{Case Studies}
\label{sec:casestudies}

We start by comparing the generators for the ADT \ensuremath{\Conid{Tree}} derived by \megadeth and
\feat, presented in Section \ref{sec:QC}, with the corresponding generator
derived by \dragen using a \ensuremath{\Varid{uniform}} cost function.
%
%
We used a generation size of 10 both for \megadeth and \dragen, and a generation
size of 400 for \feat---that is, \feat will generate test cases of maximum 400
constructors, since this is the maximum number of constructors generated by our
tool using the generation size cited above.
%
%
Figure \ref{fig:conscount} shows the differences between the complexity of the
generated values in terms of the number of constructors.
%
%
As shown in Figure \ref{fig:tree_megadeth_feat}, generators derived by
\megadeth and \feat produce very narrow distributions, being unable to generate
a diverse variety of values of different sizes.
%
%
In contrast, the \dragen optimized generator provides a much wider distribution,
i.e., from smaller to bigger values.
%
%

\begin{figure}[b] 
  \vspace{-10pt}
  \includegraphics[width=0.95\columnwidth]{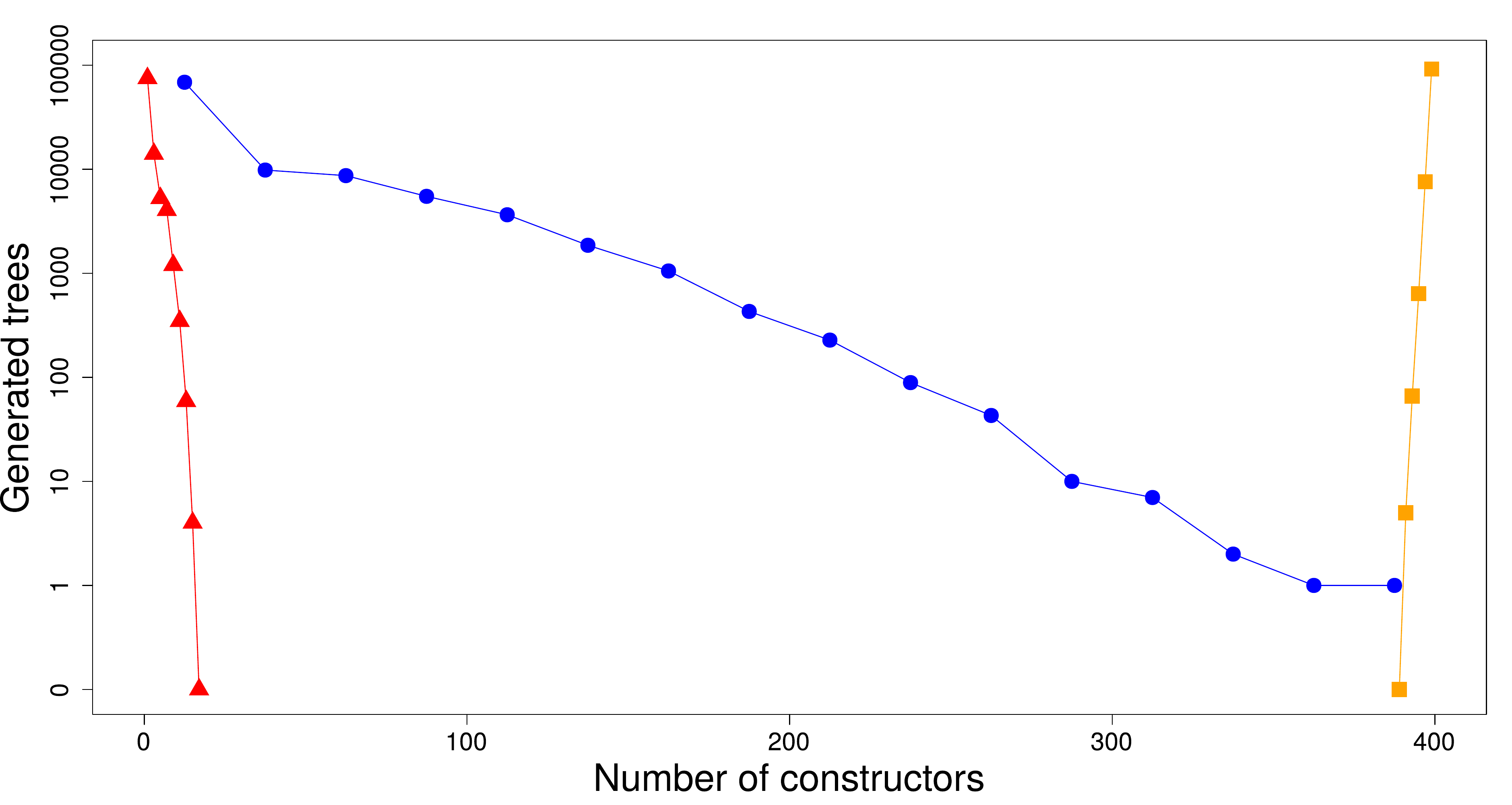}
  \vspace{-10pt}
  \caption{ \megadeth ({\color{red} $\blacktriangle$}) vs. \feat
    ({\color[RGB]{255, 163, 0} {\tiny $\blacksquare$}}) vs. \dragen
    ({\color{blue} $\bullet$}) generated distributions for type \ensuremath{\Conid{Tree}}.}
  \label{fig:conscount}
\end{figure}

%
It is likely that the richer the values generated, the better the chances of
covering more code, and thus of finding more bugs.
The next case studies provide evidence in that direction.

Although \dragen can be used to test Haskell code, we follow the same philosophy
as \quickfuzz, targeting three complex and widely used external programs to
evaluate how well our derived generators behave.
%
%
%
These applications are \emph{GNU bash 4.4}---a widely used Unix shell, \emph{GNU
  CLISP 2.49}---the GNU Common Lisp compiler, and \emph{giffix}---a small test
utility from the \emph{GIFLIB 5.1} library focused on reading and writing Gif
images.
It is worth noticing that these applications are not written in Haskell.
%
%
Nevertheless, there are Haskell libraries designed to inter-operate with them:
\emph{language-bash}, \emph{atto-lisp}, and \emph{JuicyPixels}, respectively.
These libraries provide ADT definitions which we used to synthesize \dragen
generators for the inputs of the aforementioned applications.
Moreover, they also come with serialization functions that allow us to transform
the randomly generated Haskell values
%
%
into the actual test files that we used to test each external program.
The case studies contain mutually recursive and composite ADTs with a wide
number of constructors (e.g., GNU bash spans 31 different ADTs and 136 different
constuctors)---refer to%
\ifbool{EXTENDED}{
  \ref{app:casestudies}%
}{
  the supplementary material%
}
for a rough estimation of the
scale of such data types and the data types involved with them.
%

For our experiments, we use the coverage measure known as \emph{execution path}
employed by American Fuzzy Lop (AFL) \cite{afl}---a well known fuzzer.
%
%
%
It was chosen in this work since it is also used in the work by
\citeauthor{grieco2017} \cite{grieco2017} to compare \megadeth with other
techniques.
The process consists of the \emph{instrumentation} of the binaries under test,
making them able to return the path in the code taken by each execution.
%
%
Then, we use AFL to count how many different executions are triggered by a set
of randomly generated files---also known as a corpus.
In this evaluation, we compare how different \quickcheck generators, derived
using \megadeth and using our approach, result in different code coverage when
testing external programs, as a function of the size of a set of independently,
randomly generated corpora.
We have not been able to automatically derive such generators using \feat, since
it does not work with some Haskell extensions used in the bridging libraries.
%
%
%

We generated each corpus using the same ADTs and generation sizes for each
derivation mechanism.
We used a generation size of 10 for CLISP and bash files, and a size of 5 for
Gif files.
For \dragen, we used \ensuremath{\Varid{uniform}} cost functions to reduce any external bias.
In this manner, any observed difference in the code coverage triggered by the
corpora generated using each derivation mechanism is entirely caused by the
optimization stage that our predictive approach performs, which does not
represent an extra effort for the programmer.
%
%
Moreover, we repeat each experiment 30 times using independently generated
corpora for each combination of derivation mechanism and corpus size.
%

Figure \ref{fig:all} compares the mean number of different execution paths
triggered by each pair of generators and corpus sizes, with error bars
indicating 95\% confidence intervals of the mean.
\begin{figure*}[t]
  \includegraphics[width=\textwidth]{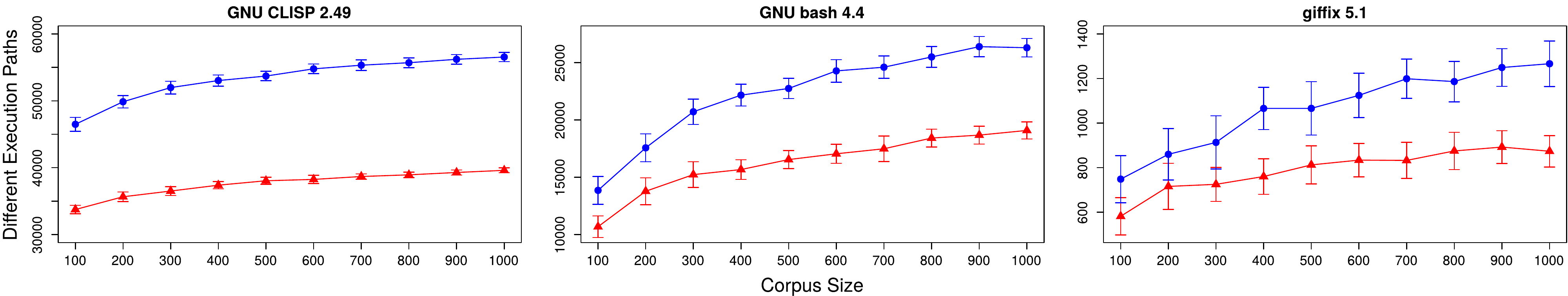}
  \vspace{-20pt}
  \caption{\label{fig:all} Path coverage comparison between \megadeth
    ({\color{red} $\blacktriangle$}) and \dragen ({\color{blue} $\bullet$}).}
  \vspace{-5pt}
\end{figure*}
%
%
It is easy to see how the \dragen generators synthesize test cases capable of
triggering a much larger number of different execution paths in comparison to
\megadeth ones.
Our results indicate average increases approximately between $35\%$ and $41\%$
with an standard error close to $0.35\%$ in the number of different execution
paths triggered in the programs under test.

An attentive reader might remember that \megadeth tends to derive generators
which produce very small test cases.
If we consider that small test cases should take less time (on average) to be
tested, is fair to think there is a trade-off between being able to test a
bigger number of smaller test cases or a smaller number of bigger ones having
the same time available.
However, when testing external software like in our experiments, it is important
to consider the time overhead introduced by the operating system.
%
%
In this scenario, it is much more preferable to test interesting values over smaller
ones.
%
%
In our tests, size differences between the generated values of each tool does do
not result in significant differences in the runtimes required to test each
corpora---refer to%
\ifbool{EXTENDED}{
  Appendix \ref{app:casestudies}.%
}{
  the supplementary material for further details.%
}
%
%
%
A user is most likely to get better results by using our tool instead of
\megadeth, with virtually \emph{the same effort.}
\looseness=-1

We also remark that, if we run sufficiently many tests, then the expected code
coverage will tend towards 100\% of the reachable code in both cases.
%
%
However, in practice, our approach is more likely to achieve higher code
coverage for the same number of test cases.



\section{Related Work}
\label{sec:related}

%

Fuzzers are tools to tests programs against randomly generated unexpected
inputs.
%
%
\quickfuzz \cite{GriecoCB16, grieco2017} is a tool that synthesizes data with
rich structure, that is, well-typed files which can be used as initial ``seeds''
for state-of-the-art fuzzers---a work flow which discovered many unknown
vulnerabilities.
%
%
%
Our work could help to improve the variation of the generated initial seeds, by
varying the distribution of \quickfuzz generators---an interesting direction for
future work.


\emph{SmallCheck} \cite{RuncimanNL08} provides a framework to exhaustively test
data sets up to a certain (small) size.
%
%
The authors also propose a variation called \emph{Lazy SmallCheck}, which
avoids the generation of multiple variants which are passed to the test, but not
actually used.

\quickcheck has been used to generate well-typed lambda terms in order to test
compilers \cite{Palka11}.
%
%
Recently, \citeauthor{MidtgaardJKNN17} extend such a technique to test compilers
for impure programming languages \cite{MidtgaardJKNN17}.
%

%
\emph{Luck} \cite{LampropoulosGHH17} is a domain specific language for writing
testing properties and \quickcheck generators at the same time.
%
%
We see \emph{Luck}'s approach as orthogonal to ours, which is mostly intended to
be used when we do not know any specific property of the system under test,
although we consider that borrowing some functionality from \emph{Luck} into
\dragen is an interesting path for future work.

%
%
Recently, \citeauthor{Lampropoulos2017} propose a framework to automatically
derive random generators for a large subclass of Coqs' inductively defined
relations \cite{Lampropoulos2017}.
%
%
This derivation process also provides proof terms certifying that each derived
generator is sound and complete with respect to the inductive relation it was
derived from.

%

\emph{Boltzmann models} \cite{Duchon2004} are a general approach to randomly
generating combinatorial structures such as trees and graphs---also extended to
work with closed simply-typed lambda terms \cite{Bendkowski2017}.
By implementing a \emph{Boltzmann sampler}, it is possible to obtain a random
generator built around such models which uniformly generates values of a target
size with a certain size tolerance.
%
%
%
%
However, this approach has practical limitations.
Firstly, the framework is not expressive enough to represent complex constrained
data structures, e.g red-black trees.
Secondly, Boltzmann samplers give the user no control over the distribution of
generated values besides ensuring size-uniform generation.
They work well in theory but further work is required to apply them to complex
structures \cite{Poulding2017}.
Conversely, \dragen provides a simple mechanism to predict and tune the overall
distribution of constructors \emph{analytically at compile-time}, using
statically known type information, and requiring no runtime reinforcements to
ensure the predicted distributions.
Future work will explore the connections between branching processes and
Boltzmann models.
\looseness=-1
%
%

Similarly to our work, \citeauthor{feldt2013} propose \godeltest
\cite{feldt2013}, a search-based framework for generating biased data.
%
%
It relies on non-determinism to generate a wide range of data structures, along
with metaheuristic search to optimize the parameters governing the desired
biases in the generated data.
%
%
Rather than using metaheuristic search, our approach employs a completely
analytical process to predict the generation distribution at each optimization
step.
%
%
A strength of the \godeltest approach is that it can optimize the probability
parameters even when there is no specific target distribution over the
constructors---this allows exploiting software behavior under test to guide the
parameter optimization.
%

The efficiency of random testing is improved if the generated inputs are evenly
spread across the input domain \cite{chan1996}.
This is the main idea of \emph{Adaptive Random Testing} (ART) \cite{chen2005}.
%
However, this work only covers the particular case of testing programs with
numerical inputs and it has also been argued that adaptive random testing has
inherent inefficiencies compared to random testing \cite{Arcuri2011}.
This strategy is later extended in \cite{ciupa2008} for object-oriented
programs.
These approaches present no analysis of the distribution obtained by the
 heuristics used, therefore we see them as orthogonal work to ours.



\section{Final Remarks}
\label{sec:conc}
%
%
We discover an interplay between the stochastic theory of branching processes
and algebraic data types structures.
This connection enables us to describe a solid mathematical foundation to
capture the behavior of our derived \quickcheck generators.
Based on our formulas, we implement a heuristic to automatically adjust the
expected number of constructors being generated as a way to control generation
distributions.
%

%
One holy grail in testing is the generation of structured data which fulfills
certain invariants.
%
%
We believe that our work could be used to enforce some invariants on data ``up
to some degree.''
For instance, by inspecting programs' source code, we could extract the
pattern-matching patterns from programs (e.g., \ensuremath{(\Conid{Cons}\;(\Conid{Cons}\;\Varid{x}))}) and derive
generators which ensure that such patterns get exercised a certain amount of
times (on average)---intriguing thoughts to drive our future work.
%

\balance



\begin{acks}
  We would like to thank Micha\l{} Pa\l{}ka, Nick Smallbone, Martin Ceresa and
  Gustavo Grieco for comments on an early draft.
  This work was funded by the Swedish Foundation for Strategic Research (SSF)
  under the project Octopi (Ref. RIT17-0023) and WebSec (Ref. RIT17-0011) as
  well as the Swedish research agency Vetenskapsr{\aa}det.
\end{acks}


\bibliography{local.bib}


\ifbool{EXTENDED}{
\clearpage \appendix
\section{Demonstrations}
\label{sec:appendixA}

In this appendix, we provide the formal development to show that the mean matrix
of types can be used to predict the distribution of constructors.
We start by defining some terminology.

\begin{helpers-def}
  Let $T_t$ be a data type defined as a sum of type constructors:
  \begin{align*}
    T_t := C_1^t + C_2^t + \cdots + C_n^t
  \end{align*}
  Where each constructor is defined as a product of data types:
  \begin{align*}
    C_c^t := T_1 \times T_2 \times \cdots \times T_m
  \end{align*}
  We will define the following observation functions:
  \begin{align*}
    cons(T_t) &= \{C_c^t\}_{c=1}^n \\
    args(C_c^t) &= \{T_j\}_{j=1}^m \\
    \lvert T_t \lvert &= \lvert cons(T_t) \lvert \ = n
  \end{align*}
  We will also define the \textit{branching factor from $C_i^u$ to $T_v$} as the
  natural number $\beta(T_v, C_i^u)$ denoting the number of occurrences of $T_v$
  in the arguments of $C_i^u$:
  \begin{align*}
    \beta(T_v, C_i^u) = \lvert \{ T_k \in args(C_i^u)\ \lvert \ T_k = T_v \} \lvert
  \end{align*}
\end{helpers-def}

Before showing our main theorem, we need some preliminary propositions.
The following one relates the mean of reproduction of constructors with their
types and the number of occurrences in the ADT declaration.

\begin{mC}
  Let $M_C$ be the mean matrix for constructors for a given, possibly mutually
  recursive data types $\{T_t\}_{t=1}^n$ and type constructors
  $\{C^t_i\}_{i=1}^{\lvert T_t \lvert}$.
  Assuming $p_{C_i^t}$ to be the probability of generating a constructor
  $C^t_i \in cons(T_t)$ whenever a value of type $T^t$ is needed, then it holds
  that:
  \begin{align}
    m_{C_i^u C_j^v} = \beta(T_v, C_i^u) \cdot p_{C_j^v}
    \label{eq:mC-prop}
  \end{align}
\end{mC}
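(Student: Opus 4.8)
The plan is to read off $m_{C_i^u C_j^v}$ directly from the generator by decomposing the number of $C_j^v$ offspring into a sum of per-argument indicator variables and then applying linearity of expectation. First I would fix the parent constructor $C_i^u$ and recall how a value headed by $C_i^u$ is built: each argument position $T_k \in args(C_i^u)$ is filled by a recursive call that generates a value of type $T_k$, and by the standing assumption of the proposition, whenever a value of type $T_v$ is demanded its head constructor is $C_j^v$ with probability exactly $p_{C_j^v}$. In the branching-process reading, the offspring of $C_i^u$ in the next generation are precisely these head constructors of its argument positions.

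Next I would note that $C_j^v$ is a constructor of $T_v$, so it can only appear as the head of an argument position whose type is $T_v$; positions of any other type contribute no $C_j^v$. I would therefore range over the argument positions of $C_i^u$ of type $T_v$ --- there are exactly $\beta(T_v, C_i^u)$ of them by definition of the branching factor --- and for each such position $k$ introduce the indicator $X_k$ equal to $1$ when the constructor generated there is $C_j^v$ and $0$ otherwise, so that $E[X_k] = p_{C_j^v}$. Writing $R_{C_i^u C_j^v} = \sum_k X_k$ for the total number of $C_j^v$ children, linearity of expectation then yields
\begin{align*}
  m_{C_i^u C_j^v} = E[R_{C_i^u C_j^v}] = \sum_k E[X_k] = \beta(T_v, C_i^u) \cdot p_{C_j^v},
\end{align*}
which is precisely (\ref{eq:mC-prop}). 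This is the uniform, one-line version of the case enumerations carried out earlier, such as the derivation of $m_{NodeA,Leaf} = 2 \cdot p_{Leaf}$ in (\ref{mNodeLeaf}).

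The argument is short once the right decomposition is chosen, so the main care lies in the bookkeeping rather than in any deep probabilistic step. In particular I would be careful that $\beta(T_v, C_i^u)$ counts \emph{direct} occurrences of $T_v$ among $args(C_i^u)$, matching the fact that the mean matrix of constructors records only immediate offspring; dependencies through nested or composite types are not part of $m_{C_i^u C_j^v}$ and are accounted for separately in Section \ref{sec:opt}. I would also stress that linearity of expectation needs no independence between argument positions, so the Galton-Watson assumptions of Section \ref{sec:bp} are only invoked to guarantee the per-position probability $E[X_k] = p_{C_j^v}$, which the proposition already takes as a hypothesis.
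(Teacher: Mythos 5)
Your proposal is correct and matches the paper's own proof essentially step for step: both decompose the count of $C_j^v$ offspring into per-argument indicator variables $X_k$ over the argument positions of $C_i^u$ of type $T_v$, note $E[X_k] = p_{C_j^v}$, and conclude by linearity of expectation that the sum over the $\beta(T_v, C_i^u)$ such positions gives $\beta(T_v, C_i^u) \cdot p_{C_j^v}$. Your added remarks --- that only direct occurrences count and that no independence between argument positions is needed --- are accurate clarifications but do not change the argument.
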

\begin{proof}
  Let $m_{C_i^u C_j^v}$ be an element of $M_C$, we know that $m_{C_i^u C_j^v}$
  represents the expected number of constructors $C_j^v \in cons(T_v)$ generated
  whenever a constructor $C_i^u \in cons(T_u)$ is generated.
  Since every constructor is composed of a product of (possibly) many arguments,
  we need sum the expected number of constructors $C_j^v$ generated by each
  argument of $C_i^u$ of type $T_v$---the expected number of constructors
  $C_j^v$ generated by an argument of type different than $T_v$ is null.
  For this, we define the random variable $X_k^{C_i^u C_j^v}$ capturing the
  number of constructors $C_j^v$ generated by the $k$-th argument of $C_i^u$ as
  follows:
  \begin{align*}
    &X_k^{C_i^u C_j^v} : cons(T_v) \rightarrow \mathbb{N}\\
    &X_k^{C_i^u C_j^v} (C_c^v) =
      \begin{cases}
        1 & \mathrm{if}\  c = j\\
        0 & \mathrm{otherwise}
      \end{cases}
  \end{align*}
  We can calculate the probabilities of generating zero or one constructors
  $C_j^v$ by the $k$-th argument of $C_i^u$ as follows:
  \begin{align*}
    P(X_k^{C_i^u C_j^v} = 0) &= 1 - p_{C_j^v}\\
    P(X_k^{C_i^u C_j^v} = 1) &=  p_{C_j^v}
  \end{align*}
  Then, we can calculate the expectancy of each $X_k^{C_i^u C_j^v}$:
  \begin{align}
    E[X_k^{C_i^u C_j^v}]
    \ =\  1 \cdot P(X_k^{C_i^u C_j^v} = 1) + 0 \cdot P(X_k^{C_i^u C_j^v} = 0)
    \ =\  p_{C_j^v}
    \label{eq:exp1}
  \end{align}
  Finally, we can calculate the expected number of constructors $C_j^v$
  generated whenever we generate a constructor $C_i^u$ by adding the expected
  number of $C_j^v$ generated by each argument of $C_i^u$ of type $T_v$:
  \begin{align*}
    m_{C_i^u C_j^v}
    &= \sum_{\{T_k \in args(C_i^u)\ \lvert\  T_k = T_v\}} E[X_k^{C_i^u C_j^v}]\\
    &= \sum_{\{T_k \in args(C_i^u)\ \lvert\  T_k = T_v\}} p_{C_j^v}
    &(\text{by (\ref{eq:exp1})})\\
    &= p_{C_j^v} \cdot \sum_{\{T_k \in args(C_i^u)\ \lvert\  T_k = T_v\}} 1
    &(\text{$p_{C_j^v}$ is constant})\\
    &= p_{C_j^v} \cdot \lvert \{ T_k \in args(C_j^v)\ \lvert \ T_k = T_v \} \lvert
    &(\sum_S 1 = \lvert S \lvert)\\
    &=  p_{C_j^v} \cdot \beta(T_v, C_i^u)
    &(\text{by def. of $\beta$})
  \end{align*}
\end{proof}

The next propositions relates the mean of reproduction of types with their
constructors.
\begin{mT}
  Let $M_T$ be the mean matrix for types for a given, possibly mutually
  recursive data types $\{T_t\}_{t=1}^n$ and type constructors
  $\{C^t_i\}_{i=1}^{\lvert T_t \lvert}$.
  Assuming $p_{C_i^t}$ to be the probability of generating a constructor
  $C^t_i \in cons(T_t)$ whenever a value of type $T_t$ is needed, then it holds
  that:
\begin{align}
  m_{T_u T_v} = \sum_{C_k^u \in cons(T^u)} \beta(T_v, C_k^u) \cdot p_{C_k^u}
  \label{eq:mT-prop}
\end{align}
\end{mT}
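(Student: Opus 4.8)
The plan is to mirror the one-step Law of Total Expectation argument used throughout Sections~\ref{sec:bp} and \ref{sec:bp2}, but now conditioning on \emph{which constructor} a type placeholder expands into. Recall that $m_{T_u T_v}$ is, by definition of the mean matrix of types, the expected number of placeholders of type $T_v$ produced when a single placeholder of type $T_u$ is expanded into a constructor. The key observation is that expanding a $T_u$ placeholder happens in two stages: first a constructor $C_k^u \in cons(T_u)$ is selected with probability $p_{C_k^u}$, and then that constructor contributes exactly one fresh placeholder for each of its arguments.

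First I would introduce a random variable $C$ ranging over $cons(T_u)$, denoting the constructor chosen when the $T_u$ placeholder is expanded, so that $P(C = C_k^u) = p_{C_k^u}$. Writing $N_{T_v}$ for the number of $T_v$ placeholders produced, the Law of Total Expectation gives
\begin{align*}
  m_{T_u T_v} = E[N_{T_v}]
  = \sum_{C_k^u \in cons(T_u)} P(C = C_k^u) \cdot E[N_{T_v} \mid C = C_k^u].
\end{align*}
The crucial simplification is that, conditioned on the chosen constructor $C_k^u$, the quantity $N_{T_v}$ is no longer random: it is precisely the number of occurrences of $T_v$ among the arguments of $C_k^u$, which is exactly $\beta(T_v, C_k^u)$ by definition. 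Hence $E[N_{T_v} \mid C = C_k^u] = \beta(T_v, C_k^u)$, and substituting $P(C = C_k^u) = p_{C_k^u}$ yields the claimed identity~(\ref{eq:mT-prop}).

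Alternatively, and to reuse the preceding proposition directly, I could observe that every $T_v$ placeholder is eventually populated by exactly one constructor of type $T_v$, so $N_{T_v}$ equals the total number of $T_v$-constructors spawned in the next generation. Conditioning on the source constructor as above and then summing the constructor-level means gives $m_{T_u T_v} = \sum_{C_k^u \in cons(T_u)} p_{C_k^u} \sum_{C_j^v \in cons(T_v)} m_{C_k^u C_j^v}$; plugging in $m_{C_k^u C_j^v} = \beta(T_v, C_k^u) \cdot p_{C_j^v}$ from equation~(\ref{eq:mC-prop}) and using that the constructor probabilities of a fixed type normalise to one, $\sum_{C_j^v \in cons(T_v)} p_{C_j^v} = 1$, collapses the inner sum to $\beta(T_v, C_k^u)$ and recovers the same formula.

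I expect the only real subtlety to be bookkeeping rather than mathematics: making precise that a type placeholder is expanded by an independent constructor choice governed by the $p_{C_k^u}$, so that the conditioning on $C$ is legitimate, and (in the second route) justifying the normalisation $\sum_{C_j^v \in cons(T_v)} p_{C_j^v} = 1$, which holds because whenever a value of type $T_v$ is demanded the generator must select exactly one of its constructors. Everything else is the by-now-familiar single-step total-expectation computation.
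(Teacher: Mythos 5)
Your main argument is correct and is essentially the paper's own proof: the paper introduces a random variable $Y^{uv}$ on $cons(T_u)$ with $Y^{uv}(C_k^u) = \beta(T_v, C_k^u)$ and computes $m_{T_u T_v} = E[Y^{uv}] = \sum_{C_k^u \in cons(T_u)} \beta(T_v, C_k^u)\cdot p_{C_k^u}$, which is exactly your conditioning on the chosen constructor followed by the observation that, given the constructor, the number of $T_v$ placeholders is the deterministic quantity $\beta(T_v, C_k^u)$. Your alternative route via the constructor-level means and the normalisation $\sum_{C_j^v} p_{C_j^v} = 1$ is also sound, but the paper does not take it.
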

\begin{proof}
  Let $m_{T_uT_v}$ be an element of $M_T$, we know that $m_{T_uT_v}$ represents
  the expected number of placeholders of type $T_v$ generated whenever a
  placeholder of type $T_u$ is generated, i.e. by any of its constructors.
  Therefore, we need to average the number of place holders of type $T_v$
  appearing on each constructor of $T_u$.
  For that, we introduce the random variable $Y^{uv}$ capturing this behavior.
%
%
  \begin{align*}
    Y^{uv} : cons(T_u) \rightarrow \mathbb{N}\\
    Y^{uv} (C_k^u) = \beta(T_v, C_k^u)
  \end{align*}
  And we can obtain $m_{T_u T_v}$ by calculating the expected value of $Y^{uv}$
  as follows.
%
%
  \begin{align*}
    m_{T_u T_v}
    &= E[Y^{uv}]\\
    &= \sum_{C_k^u \,\in\, cons(T_u)} \beta(T_v, C_k^u) \cdot P(Y^{uv} = C_k^u)
    &(\text{def. of $E[Y^{uv}]$})\\
    &= \sum_{C_k^u \,\in\, cons(T_u)} \beta(T_v, C_k^u) \cdot p_{C_k^u}
    &(\text{def. of $p_{C_k^u}$})
  \end{align*}
\end{proof}

The next proposition relates one entry in $M_T$ with its corresponding in $M_C$.
\begin{mC-vs-mT}
  Let $M_C$ and $M_T$ be the mean matrices for constructors and types
  respctively for a given, possibly mutually recursive data types
  $\{T_t\}_{t=1}^n$ and type constructors $\{C^t_i\}_{i=1}^{\lvert T_t \lvert}$.
  Assuming $p_{C_i^t}$ to be the probability of generating a type constructor
  $C^t_i \in cons(T_t)$ whenever a value of type $T_t$ is needed, then it holds
  that:
  \begin{align}
    p_{C_i^v} \cdot m_{T_u T_v} = \sum_{C_j^u \in cons(T_u)} m_{C_j^u C_i^v} \cdot p_{C_j^u}
    \label{eq:mC-vs-mT}
  \end{align}
\end{mC-vs-mT}
\begin{proof}
  Let $C_i^u$ and $C_j^v$ be type constructors of $T^u$ and $T^v$ respectively.
  Then, by (\ref{eq:mC-prop}) and (\ref{eq:mT-prop}) we have:
  \begin{align}
    m_{C_i^u C_j^v} &= \beta(T_v, C_i^u) \cdot p_{C_j^v}
    \qquad \label{eq:mC1} \\
    m_{T_u T_v} &= \sum_{C_k^u \in cons(T_u)} \beta(T_v, C_k^u) \cdot p_{C_k^u}
    \qquad \label{eq:mT1}
  \end{align}
  Now, we can rewrite (\ref{eq:mC1}) as follows:
  \begin{align}
    \beta(T_v, C_i^u) &= \frac{m_{C_i^u C_j^v}}{p_{C_j^v}}
    \qquad (\text{if}\  p_{C_j^v} \ne 0) \label{eq:mC2}
  \end{align}
(In the case that $p_{C_j^v} = 0$, the last equation in this proposition holds
trivially by (\ref{eq:mC1}).)
  And by replacing (\ref{eq:mC2}) in (\ref{eq:mT1}) we obtain:
  \begin{align*}
    m_{T_u T_v} &= \sum_{C_k^u \in cons(T_u)} \frac{m_{C_i^u C_j^v}}{p_{C_j^v}} \cdot p_{C_k^u}\\
    m_{T_u T_v} &= \frac{1}{p_{C_j^v}} \cdot \sum_{C_k^u \in cons(T_u)} m_{C_i^u C_j^v} \cdot p_{C_k^u}
    &(\text{$p_{C_j^v}$ constant})\\
    p_{C_j^v} \cdot m_{T_u T_v}  &= \sum_{C_k^u \in cons(T_u)} m_{C_i^u C_j^v} \cdot p_{C_k^u}
  \end{align*}
\end{proof}

\noindent
Now, we proceed to prove our main result.
\begin{types-matrix}
  Consider a QuickCheck generator for a (possibly) mutually recursive data types
  $\{T_t\}_{t=1}^k$ and type constructors $\{C_i^t\}_{i=1}^{\lvert T^t \lvert}$.
  We assume $p_{C^t_i}$ as the probability of generating a type constructor
  $C^t_i \in cons(T_t)$ when a value of type $T_t$ is needed.
  We will call $T_r$ $(1 \le r \le k)$ to the generation root data type, and
  $M_C$ and $M_T$ to the mean matrices for the multi-type branching process
  capturing the generation behavior of type constructors and types respectively.
%
%
  The branching process predicting the expected number of type constructors at
  level $n$ is governed by the formula:
  \begin{align*}
    E[G_n^C]^T = E[G_0^C]^T \cdot \left( \frac{I - (M_C)^{n+1}}{I - M_C} \right)
  \end{align*}
  In the same way, the branching process predicting the expected number of type
  placeholders at level $n$ is given by:
  \begin{align*}
    E[G_n^T]^T = E[G_0^T]^T \cdot \left( \frac{I - (M_T)^{n+1}}{I - M_T} \right)
  \end{align*}
  where $G_n^C$ denotes the constructors population at the level $n$, and
  $G_n^T$ denotes the type placeholders population at the level $n$.
  The expected number of constructors $C_i^t$ at the $n$-th level is given by
  the expected constructors population at the $n$-level $E[G_n^C]$ indexed by
  the corresponding constructor.
  Similarly, the expected number of placeholders of type $T_t$ at the $n$-th
  level is given by the expected types population at the $n$-level $E[G_n^T]$
  indexed by the corresponding type.
  The initial constructors population $E[G_0^C]$ is defined as the probability
  of each constructor if it belongs to the root data type, and zero if it belong
  to any other data type:
  \begin{align*}
    E[G_0^C].C_i^t  =
    \begin{cases}
      p_{C_i^t} & \mathrm{if}\ t = r\\
      0 & \mathrm{otherwise}
    \end{cases}
  \end{align*}
  The initial type placeholders population is defined as the almost surely
  probability for the root type, and zero for any other type:
  \begin{align*}
    E[G_0^T].T_t =
    \begin{cases}
      1 & \mathrm{if}\  t = r\\
      0 & \mathrm{otherwise}
    \end{cases}
  \end{align*}
  Finally, it holds that:
  \begin{align*}
    (E[G_n^C]).C_i^t = (E[G_n^T]).T^t \cdot p_{C_i^t}
  \end{align*}
  In other words, the expected number of constructors $C^t_i$ at the $n$-th
  level consists of the expected number of placeholders of its type (i.e.,
  $T_t$) at level $n$ times the probability to generate that constructor.
\end{types-matrix}
\begin{proof}
  By induction on the generation size $n$.\\
  \begin{itemize}
  \item \textbf{Base case}\\
    We want to prove $(E[G_0^C]).C_i^t = (E[G_0^T]).T_t \cdot p_{C_i^t}$.\\
    Let $T_t$ be a data type from the Galton-Watson branching process.\\
  \begin{itemize}
  \item If $T_t = T_r$ then by the definitions of the initial type constructors
    and type placeholders populations we have:
    \begin{align*}
      (E[C_0^C]).C_i^t = p_{C_i^t} \qquad\qquad (E[G_0^T]).T_t = 1
    \end{align*}
    And the theorem trivially holds by replacing $(E[G_0^C]).C_i^t$ and
    $(E[G_0^T]).T_t$ with the previous equations in the goal.\\
  \item If $T_t \neq T_r$ then by the definitions of the initial type
    constructors and type placeholders populations we have:
    \begin{align*}
      (E[G_0^C]).C_i^t = 0 \qquad\qquad (E[G_0^T]).T_t = 0
    \end{align*}
    And once again, the theorem trivially holds by replacing $(E[G_0^C]).C_i^t$
    and ($E[G_0^T]).T_t$ with the previous equations in the goal.
  \end{itemize}
  \vspace{10pt}
\newpage
\nobalance
  \item \textbf{Inductive case}\\
    We want to prove $(E[G_n^C]).C_i^t = (E[G_n^T]).T_t \cdot p_{C_i^t}$.\\
    For simplicity, we will call $\Gamma = \{T_t\}_{t=1}^k$.
  \begin{align*}
    (E[G_n^C]).C_i^t
    &= E \left[ \sum_{T_k \in \Gamma} \left( \sum_{C_j^k \in cons(T_k)} (G_{(n-1)}^C).C_j^k \cdot m_{C_j^k C_i^t} \right) \right]
    &(\text{by G.W. proc.})\\
    &= \sum_{T_k \in \Gamma} E \left[ \sum_{C_j^k \in cons(T_k)} (G_{(n-1)}^C).C_j^k \cdot m_{C_j^k C_i^t}  \right]
    &(\text{by prob.})\\
    &= \sum_{T_k \in \Gamma} \left( \sum_{C_j^k \in cons(T_k)} E [ (G_{(n-1)}^C).C_j^k \cdot m_{C_j^k C_i^t} ] \right)
    &(\text{by prob.})\\
    &= \sum_{T_k \in \Gamma}\left( \sum_{C_j^k \in cons(T_k)} E [(G_{(n-1)}^C).C_j^k ] \cdot m_{C_j^k C_i^t} \right)
    &(\text{by prob.})\\
    &= \sum_{T_k \in \Gamma}\left( \sum_{C_j^k \in cons(T_k)} (E [G_{(n-1)}^C]).C_j^k \cdot m_{C_j^k C_i^t} \right)
    &(\text{by linear alg.})\\
    &= \sum_{T_k \in \Gamma}\left( \sum_{C_j^k \in cons(T_k)} (E[G_{(n-1)}^T]).T_t \cdot p_{C_j^k} \cdot m_{C_j^k C_i^t} \right)
    &(\text{by I.H.})\\
    &= \sum_{T_k \in \Gamma} (E[G_{(n-1)}^T]).T_t \cdot \sum_{C_j^k \in cons(T_k)}  p_{C_j^k} \cdot m_{C_j^k C_i^t}
    &(\text{by linear alg.})\\
    &= \sum_{T_k \in \Gamma} (E[G_{(n-1)}^T]).T_t \cdot p_{C_i^t} \cdot m_{T_k T_t}
    &(\text{by (\ref{eq:mC-vs-mT})})\\
    &= \sum_{T_k \in \Gamma} (E[G_{(n-1)}^T]).T_t \cdot m_{T_k T_t} \cdot p_{C_i^t}
    &(\text{rearrange})\\
    &= \sum_{T_k \in \Gamma} E[(G_{(n-1)}^T).T_t] \cdot m_{T_k T_t} \cdot p_{C_i^k}
    &(\text{by linear alg.})\\
    &= \sum_{T_k \in \Gamma} E[(G_{(n-1)}^T).T_t \cdot m_{T_k T_t}] \cdot p_{C_i^t}
    &(\text{by prob.})\\
    &= E \left[ \sum_{T_k \in \Gamma} (G_{(n-1)}^T).T_t \cdot m_{T_k T_t} \right] \cdot p_{C_i^t}
    &(\text{by prob.})\\
    &= (E [ G_n^T]).T_t \cdot p_{C_i^t}
    &(\text{by G.W. proc.})
  \end{align*}
\end{itemize}
\end{proof}
%
\clearpage
\section{Additional information}
\label{sec:appendixB}

\balance

This appendix is meant to provide further analyses for the aspects presented
throughout this work that would not fit into the available space.
%

\subsection{Termination issues with library \derive}
\label{app:derive}

\begin{figure}[b]
  \includegraphics[width=0.75\columnwidth]{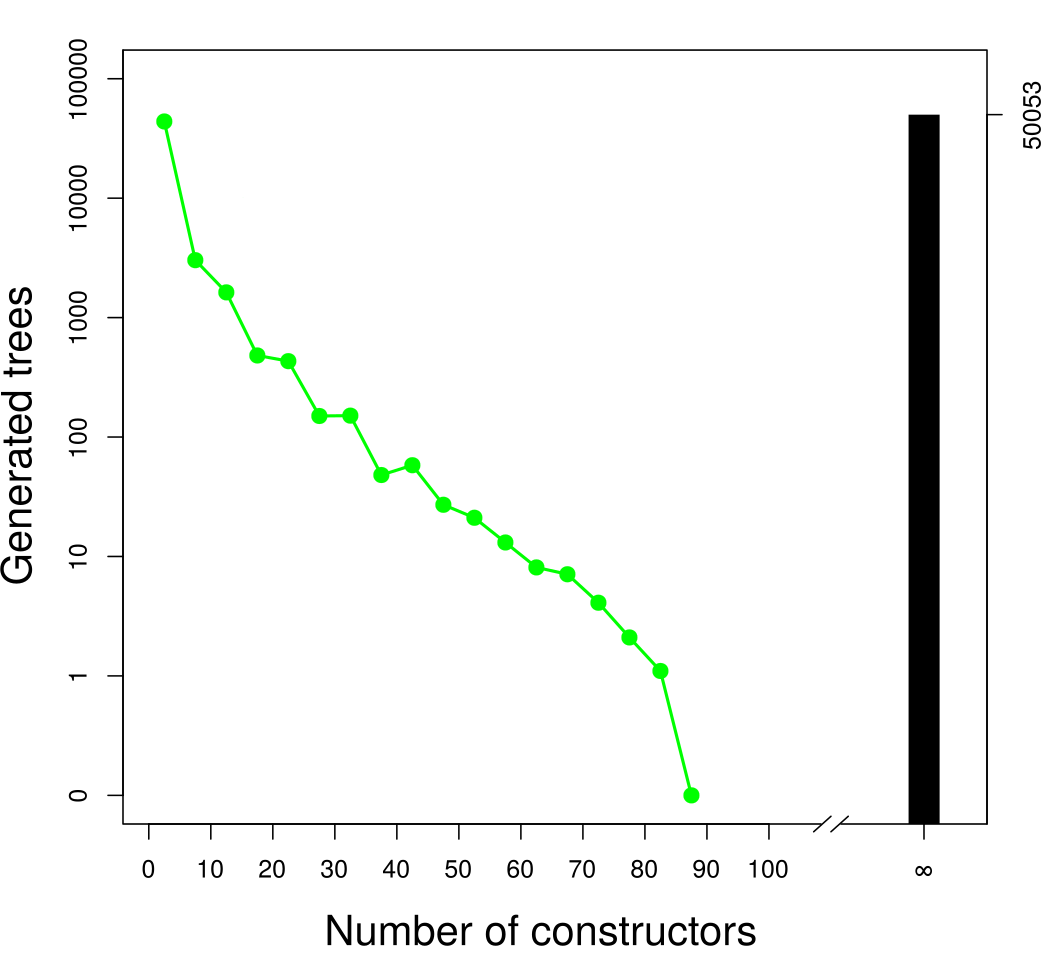}
  \caption{
    \label{fig:deriveloops} Distribution of (the amount of) \ensuremath{\Conid{T}} constructors
    induced by \derive.}
\end{figure}

As we have introduced in Section \ref{sec:QC}, the library \derive provides an
easy alternative to automatically synthesize random generators in compile time.
However, in presence of recursive data types, the generators obtained with this
tool lack mechanisms to ensure termination.
For instance, consider the following data type definition and its corresponding
generator obtained with \derive:
\begin{framed}
\begin{hscode}\SaveRestoreHook
\column{B}{@{}>{\hspre}l<{\hspost}@{}}%
\column{3}{@{}>{\hspre}l<{\hspost}@{}}%
\column{5}{@{}>{\hspre}l<{\hspost}@{}}%
\column{24}{@{}>{\hspre}c<{\hspost}@{}}%
\column{24E}{@{}l@{}}%
\column{27}{@{}>{\hspre}l<{\hspost}@{}}%
\column{E}{@{}>{\hspre}l<{\hspost}@{}}%
\>[3]{}\mathbf{data}\;\Conid{T}\mathrel{=}\Conid{A}\mid \Conid{B}\;\Conid{T}\;\Conid{T}\mid \Conid{C}\;\Conid{T}\;\Conid{T}{}\<[E]%
\\[\blanklineskip]%
\>[3]{}\mathbf{instance}\;\Conid{Arbitrary}\;\Conid{T}\;\mathbf{where}{}\<[E]%
\\
\>[3]{}\hsindent{2}{}\<[5]%
\>[5]{}\Varid{arbitrary}\mathrel{=}\Varid{oneof}\;{}\<[24]%
\>[24]{}[\mskip1.5mu {}\<[24E]%
\>[27]{}\Varid{pure}\;\Conid{A}{}\<[E]%
\\
\>[24]{},{}\<[24E]%
\>[27]{}\Conid{B}\mathop{\langle \texttt{\$} \rangle}\Varid{arbitrary}\mathop{\langle \ast \rangle}\Varid{arbitrary}{}\<[E]%
\\
\>[24]{},{}\<[24E]%
\>[27]{}\Conid{C}\mathop{\langle \texttt{\$} \rangle}\Varid{arbitrary}\mathop{\langle \ast \rangle}\Varid{arbitrary}\mskip1.5mu]{}\<[E]%
\ColumnHook
\end{hscode}\resethooks
\end{framed}
When using this generator, \emph{every constructor in the obtained generator has
  the same probability of being chosen}.
Aditionally, at each point of the generation process, if we randomly generate a
recursive type constructor (either \ensuremath{\Conid{B}} or \ensuremath{\Conid{C}}), then we also need to generate
two new \ensuremath{\Conid{T}} values in order to fill the arguments of the chosen type
constructor.
As a result, it is expected (on average) that each time \quickcheck generates a
recursive constructor (i.e., \ensuremath{\Conid{B}} or \ensuremath{\Conid{C}}) at one level, {\em more than one
  recursive constructor is generated at the next level}---thus, frequently
leading to an infinite generation loop.
This behavior can be formalized using the concept known as \emph{probability
  generating function}, where it is proven that the extinction probability of a
generated value $d$ (and thus the termination of the generation) can be
calculated by finding the smallest fix point of the generation recurrence.
In our example, this is the smallest $d$ such that $d = P_A + (P_B + P_C) \cdot
d^2 = (1/3) + (2/3) \cdot d^2$, where $P_i$ denotes the probabilty of generating
a $i$ constructor.
In this case $d = 1/2$.

Figure \ref{fig:deriveloops} provides an empirical verification of this
non-terminating behavior.
It shows the distribution (in terms of amount of constructors) of 100000
randomly generated \ensuremath{\Conid{T}} values obtained using the \derive generator shown above.
The black bar on the right represents the amount of values that induced an
infinite generation loop.
Such values were recognized using a sufficiently big timeout.
The random generation gets stuck in an infinite generation loop almost exactly
half of the times we generate a random \ensuremath{\Conid{T}} value.

In practice, this non terminating behavior gets worse as we increase either the
number of recursive constructors or the number of their recursive arguments in
the data type definition, since this increases the probability of choosing a
recursive constructor each time we need to generate a subterm.

\subsection{Multi-type Branching Processes}
\label{app:expectation}

We will verify the soundness of the step noted as $(\star)$, used to deduce
$E[G_n^j | G_{n-1}]$ in Section \ref{sec:bp2}.
In first place, note that $E[G_n^j | G_{n-1}]$ can be rewritten as:
\begin{align*}
  E[G_n^j | G_{n-1}] = E \left[ \sum_{i=1}^{d} \sum_{p=1}^{G_{n-1}} \xi_{ij}^p \right]
\end{align*}
Where symbol $\xi_{ij}^p$ denotes the number of offspring of kind $j$ that the
parent $p$ of kind $i$ produces.
If the parent $p$ has not kind $i$, then $\xi_{ij}^p = 0$.
Essentially, the sums simply iterate on all of the different kind of parents
present in the $n$th-generation, counting the number of offspring of kind $j$
that they produce.
Then, since the expectation of the sum is the sum of expectation, we have that:
\begin{align*}
E[G_n^j | G_{n-1}] = \sum_{i=1}^{d} \sum_{p=1}^{G_{n-1}} E \left[ \xi_{ij}^p \right]
\end{align*}
In the inner sum, there are some terms which are $0$ and others which are the
expected offspring of kind $j$ that a parent of kind $i$ produces.
As introduced in Section \ref{sec:bp2}, we capture with random variable $R_{ij}$
the distribution governing that a parent of kind $i$ produces offspring of kind
$j$.
Finally, by filtering out all the terms which are $0$ in the inner sum, i.e.,
where $p \neq i$, we obtain the expected result:
\begin{align*}
E[G_n^j | G_{n-1}] = \sum_{i=1}^{d} G_{(n-1)}^i\!\cdot\! E [R_{ij}]
\end{align*}

\subsection{Terminal constructors}
\label{app:terminals}

As we explained in Section \ref{sec:terminals}, our tool sinthesizes random
generators for which the generation of terminal constructors can be thought of
two different random processes.
More specifically, the first ($n-1$) generations of the branching process are
composed of a mix of non-terminals and terminals constructors.
The last level, however, only contains terminal constructors since the size
limit has been reached.
Figure \ref{fig:terminals} shows a graphical representation of the overall
process.

\begin{figure}[H] 
\newcommand{\term}{{\color{red} \small $\blacksquare$}}
\newcommand{\nonterm}{{\color{blue} \huge $\bullet$}}
\centering
\begin{tikzpicture}
  [ level 1+/.style={level distance=1cm, sibling distance = 0.25cm} ]
    \Tree
      [.\node(1){\nonterm};
        [.\node(2){\nonterm};
        \edge[densely dotted, thick] node[auto=left]{};
        [.\node(4){\nonterm};
          \node(8){\term};
          \node(9){\term};
        ]
        \edge[densely dotted, thick] node[auto=left]{};
        [.\node(5){\nonterm};
           \node(10){\term};
           \node(11){\term};
        ]
      ]
      [.\node(16){\term};
      ]
      [.\node(3){\nonterm};
          \node(6){\term};
          \edge[densely dotted, thick] node[auto=left]{};
          [.\node(7){\nonterm};
             \node(14){\term};
             \node(15){\term};
          ]
       ]
    ]
    \draw[densely dotted, thick](2)--(16);
    \draw[densely dotted, thick](16)--(3);
    \draw[densely dotted, thick](4)--(5);
    \draw[densely dotted, thick](6)--(7);
    \draw[densely dotted, thick](8)--(9);
    \draw[densely dotted, thick](10)--(11);
    \draw[densely dotted, thick](14)--(15);
    \node[right = 2.75cm of 1]          (Z0Text)  {$G_0$};
    \node[below = 0.45cm of Z0Text]     (Z1Text)  {$G_1$};
    \node[below = 0.45cm of Z1Text]     (Zn1Text) {$G_{n-1}$};
    \node[below = 0.50cm of Zn1Text]    (ZnText)  {$G_{n}$};
    \draw[densely dotted, thick](Z1Text)--(Zn1Text);
    \draw[dashed, gray] (-3.5,-2.5)--(4,-2.5);
\end{tikzpicture}
\caption{\label{fig:terminals} Generation processes of non-terminal
  ({\color{blue} \Large $\bullet$}) and terminal ({\color{red} \tiny
    $\blacksquare$}) constructors.}
\end{figure}

\subsection{Implementation}
\label{app:implementation}

In this subsection, will give more details on the implementation of our tool.
Firstly, Figure \ref{fig:pipeline} shows a schema for the automatic derivation
pipeline our tool performs.
The user provides a target data type, a cost function and a desired generation
size, and our tool returns an optimized random generator.
The components marked in red are heavily dependent on Template Haskell and refer
to the type introspection and code generation stages of \dragen, while the
intermediate stages (in blue) are composed by our prediction mechanishm and the
probabilities optimizator.
\begin{figure}[H]
  \includegraphics[width=\columnwidth]{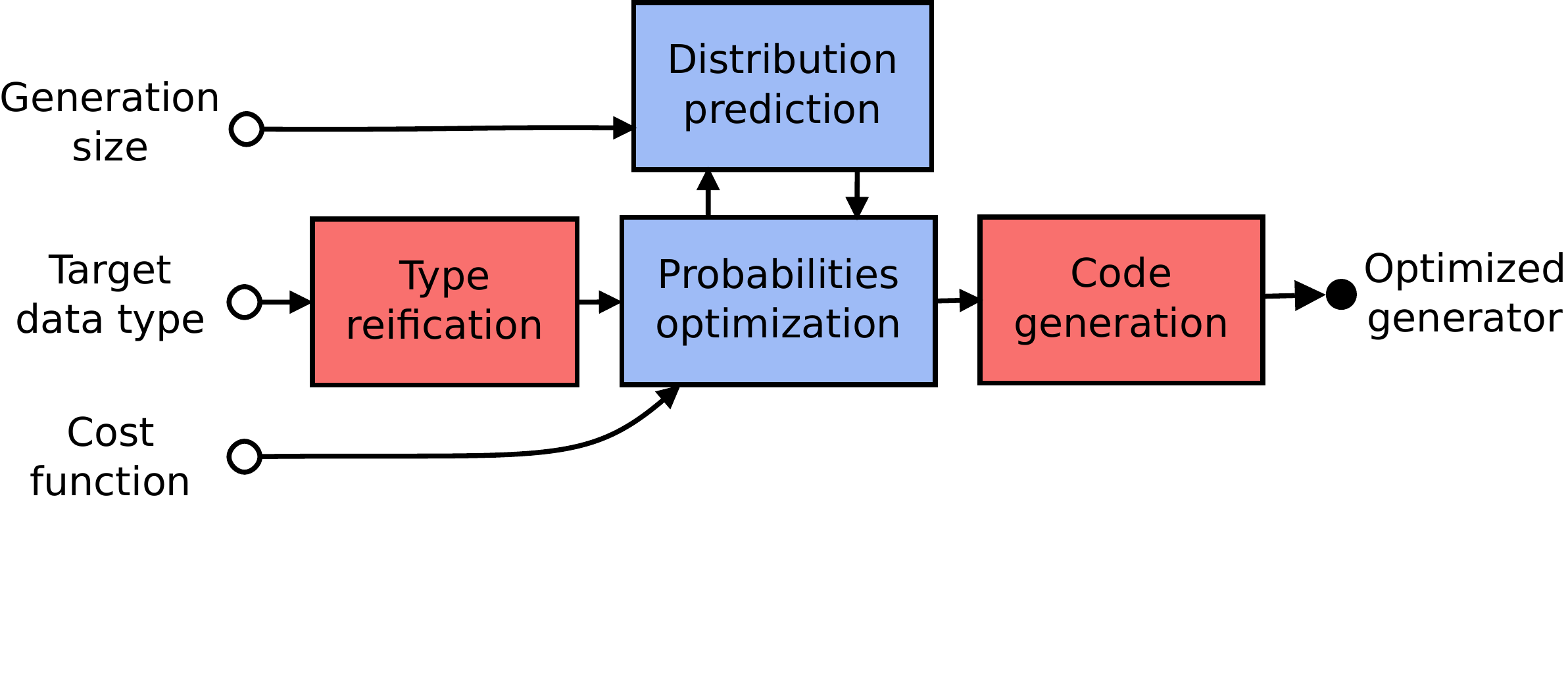}
  \vspace{-50pt}
  \caption{\label{fig:pipeline} Generation schema.}
\end{figure}

\paragraph{Cost functions}

The probabilities optimizer that our tool implements essentially works
minimizing a provided cost function that encodes the desired distribution of
constructors at the optimized generator.
As shown in Section \ref{sec:implementation}, \dragen comes with a minimal set
of useful cost functions.
Such functions are built around the \emph{Chi-Square Goodness of Fit Test}
\cite{chisquarebook}, a statistical test used quantify how the observed value of
a given phenomena is significantly different from its expected value:
\begin{align*}
  \chi^2 = \sum_{C_i \in\, \Gamma} \frac{(observed_i - expected_i) ^ 2}{expected_i}
\end{align*}
Where $\Gamma$ is a subset of the constructors involved in the generation
process; $observed_i$ corresponds to the predicted number of generated $C_i$
constructors; and $expected_i$ corresponds to the amount of constructors $C_i$
desired in the distribution of the optimized generator.
This fitness test was chosen for empirical reasons, since it provides better
results in practice when finding probabilities that ensure certain
distributions.

In this appendix we will take special attention to the \ensuremath{\Varid{weighted}} cost function,
since it is the most general one that our tool provides---the remaining cost
funcions provided could be expressed in terms of \ensuremath{\Varid{weighted}}.
This function uses our previously discussed prediction mechanism to obtain a
prediction of the constructors distribution under the current given
probabilities and the generation size (see \ensuremath{\Varid{obs}}), and uses it to calculate the
Chi-Square Goodness of Fit Test.
A simplified implementation of this cost function is as follows.

\begin{framed}
\begin{hscode}\SaveRestoreHook
\column{B}{@{}>{\hspre}l<{\hspost}@{}}%
\column{3}{@{}>{\hspre}l<{\hspost}@{}}%
\column{5}{@{}>{\hspre}l<{\hspost}@{}}%
\column{7}{@{}>{\hspre}l<{\hspost}@{}}%
\column{E}{@{}>{\hspre}l<{\hspost}@{}}%
\>[B]{}\Varid{weighted}\mathbin{::}[\mskip1.5mu (\Conid{Name},\Conid{Double})\mskip1.5mu]\to \Conid{CostFunction}{}\<[E]%
\\
\>[B]{}\Varid{weighted}\;\Varid{weights}\;\Varid{size}\;\Varid{probs}\mathrel{=}\Varid{chiSquare}\;\Varid{obs}\;\Varid{exp}{}\<[E]%
\\
\>[B]{}\hsindent{3}{}\<[3]%
\>[3]{}\mathbf{where}{}\<[E]%
\\
\>[3]{}\hsindent{2}{}\<[5]%
\>[5]{}\Varid{chiSquare}\mathrel{=}\Varid{sum}\;\!\circ\!\;\Varid{zipWith}\;(\lambda \Varid{o}\;\Varid{e}\to (\Varid{o}\mathbin{-}\Varid{e})\;\!^2\mathbin{/}\Varid{e}){}\<[E]%
\\
\>[3]{}\hsindent{2}{}\<[5]%
\>[5]{}\Varid{obs}\mathrel{=}\Varid{predict}\;\Varid{size}\;\Varid{probs}{}\<[E]%
\\
\>[3]{}\hsindent{2}{}\<[5]%
\>[5]{}\Varid{exp}\mathrel{=}\Varid{map}\;\Varid{weight}\;(\Varid{\Conid{Map}.keys}\;\Varid{probs}){}\<[E]%
\\
\>[3]{}\hsindent{2}{}\<[5]%
\>[5]{}\Varid{weight}\;\Varid{con}\mathrel{=}\mathbf{case}\;\Varid{lookup}\;\Varid{con}\;\Varid{weights}\;\mathbf{of}{}\<[E]%
\\
\>[5]{}\hsindent{2}{}\<[7]%
\>[7]{}\Conid{Just}\;\Varid{w}\to \Varid{w}\mathbin{*}\Varid{size}{}\<[E]%
\\
\>[5]{}\hsindent{2}{}\<[7]%
\>[7]{}\Conid{Nothing}\to \mathrm{0}{}\<[E]%
\ColumnHook
\end{hscode}\resethooks
\end{framed}

Note how we multiply each weight by the generation size provided by the user
(case \ensuremath{\Conid{Just}\;\Varid{w}}), as a simple way to control the relative size of the generated
values.
Moreover, the generation probabilities for the constructors not listed in the
proportions list do not contribute to the cost (case \ensuremath{\Conid{Nothing}}), and thus they
can be freely adjusted by the optimizer to fit the proportions of the listed
constructors.
In this light, the \ensuremath{\Varid{uniform}} cost function can be seen as a special case of
\ensuremath{\Varid{weighted}}, where every constructor is listed with weight 1.

\paragraph{Optimization algorithm}

As introduced in Section \ref{sec:implementation}, our tool makes use of an
optimization mechanishm in order to obtain a suitable generation probabilities
assignment for its derived generators.
Figure \ref{algo:optimize} illustrates a simplified implementation of our
optimization algorithm.
This optimizer works selecting recursively the most suitable neighbor, i.e., a
probability assignment that it close to the current one and that minimizes the
output of the provided cost function.
This process is repeated until a local minimum is found, when the are no further
neighbors that remains unvisited; or if the step improvement is below a minimum
predetermined $\varepsilon$.

In our setting, neighbors are obtained by taking the current probability
distribution, and constructing a list of paired probability distributions, where
each one is constructed from the current distribution, adjusting each
constructor probability by $\pm\Delta$.
This behavior is shown in Figure \ref{algo:neighborhood}.
Note the need of bound checking and normalization of the new neighbors in order
to enforce a probability distribution (\ensuremath{\Varid{max}\;\mathrm{0}} and \ensuremath{\Varid{norm}}).
Each pair of neighbors is then joined together and returned as the current
probability distribution immediate neighborhood.
\begin{figure}[H]
\begin{framed}
\begin{hscode}\SaveRestoreHook
\column{B}{@{}>{\hspre}l<{\hspost}@{}}%
\column{3}{@{}>{\hspre}l<{\hspost}@{}}%
\column{5}{@{}>{\hspre}l<{\hspost}@{}}%
\column{7}{@{}>{\hspre}l<{\hspost}@{}}%
\column{24}{@{}>{\hspre}l<{\hspost}@{}}%
\column{E}{@{}>{\hspre}l<{\hspost}@{}}%
\>[B]{}\Varid{optimize}\mathbin{::}\Conid{CostFunction}\to \Conid{Size}\to \Conid{ProbMap}\to \Conid{ProbMap}{}\<[E]%
\\
\>[B]{}\Varid{optimize}\;\Varid{cost}\;\Varid{size}\;\Varid{init}\mathrel{=}\Varid{localSearch}\;\Varid{init}\;[\mskip1.5mu \mskip1.5mu]\;\mathbf{where}{}\<[E]%
\\
\>[B]{}\hsindent{3}{}\<[3]%
\>[3]{}\Varid{localSearch}\;\Varid{focus}\;\Varid{visited}{}\<[E]%
\\
\>[3]{}\hsindent{2}{}\<[5]%
\>[5]{}\mid \Varid{null}\;\Varid{new}{}\<[24]%
\>[24]{}\mathrel{=}\Varid{focus}{}\<[E]%
\\
\>[3]{}\hsindent{2}{}\<[5]%
\>[5]{}\mid \Varid{gain}\leq \varepsilon{}\<[24]%
\>[24]{}\mathrel{=}\Varid{focus}{}\<[E]%
\\
\>[3]{}\hsindent{2}{}\<[5]%
\>[5]{}\mid \Varid{otherwise}{}\<[24]%
\>[24]{}\mathrel{=}\Varid{localSearch}\;\Varid{best}\;\Varid{frontier}{}\<[E]%
\\
\>[3]{}\hsindent{2}{}\<[5]%
\>[5]{}\mathbf{where}{}\<[E]%
\\
\>[5]{}\hsindent{2}{}\<[7]%
\>[7]{}\Varid{best}\mathrel{=}\Varid{minimumBy}\;(\Varid{comparing}\;(\Varid{cost}\;\Varid{size}))\;\Varid{new}{}\<[E]%
\\
\>[5]{}\hsindent{2}{}\<[7]%
\>[7]{}\Varid{new}\mathrel{=}\Varid{neighbors}\;\Varid{focus}\mathbin{\char92 \char92 }(\Varid{focus}\mathbin{:}\Varid{visited}){}\<[E]%
\\
\>[5]{}\hsindent{2}{}\<[7]%
\>[7]{}\Varid{frontier}\mathrel{=}\Varid{new}\plus \Varid{visited}{}\<[E]%
\\
\>[5]{}\hsindent{2}{}\<[7]%
\>[7]{}\Varid{gain}\mathrel{=}\Varid{cost}\;\Varid{size}\;\Varid{focus}\mathbin{-}\Varid{cost}\;\Varid{size}\;\Varid{best}{}\<[E]%
\ColumnHook
\end{hscode}\resethooks
\end{framed}
\caption{Optimization algorithm.}
\label{algo:optimize}
\end{figure}
\vspace{-10pt}

\begin{figure}[H]
\begin{framed}
\begin{hscode}\SaveRestoreHook
\column{B}{@{}>{\hspre}l<{\hspost}@{}}%
\column{3}{@{}>{\hspre}l<{\hspost}@{}}%
\column{10}{@{}>{\hspre}l<{\hspost}@{}}%
\column{25}{@{}>{\hspre}l<{\hspost}@{}}%
\column{53}{@{}>{\hspre}l<{\hspost}@{}}%
\column{64}{@{}>{\hspre}l<{\hspost}@{}}%
\column{E}{@{}>{\hspre}l<{\hspost}@{}}%
\>[B]{}\Varid{neighbors}\mathbin{::}\Conid{ProbMap}\to [\mskip1.5mu \Conid{ProbMap}\mskip1.5mu]{}\<[E]%
\\
\>[B]{}\Varid{neighbors}\;\Varid{probs}\mathrel{=}\Varid{concatMap}\;\Varid{perturb}\;(\Varid{\Conid{Map}.keys}\;\Varid{probs}){}\<[E]%
\\
\>[B]{}\hsindent{3}{}\<[3]%
\>[3]{}\mathbf{where}\;{}\<[10]%
\>[10]{}\Varid{perturb}\;\Varid{con}\mathrel{=}{}\<[25]%
\>[25]{}[\mskip1.5mu \Varid{norm}\;(\Varid{adj}\;{}\<[53]%
\>[53]{}(\mathbin{+}\Delta)\;{}\<[64]%
\>[64]{}\Varid{con}){}\<[E]%
\\
\>[25]{},\Varid{norm}\;(\Varid{adj}\;(\Varid{max}\;\mathrm{0}\;\!\circ\!\;{}\<[53]%
\>[53]{}(\mathbin{-}\Delta))\;{}\<[64]%
\>[64]{}\Varid{con})\mskip1.5mu]{}\<[E]%
\\
\>[10]{}\Varid{norm}\;\Varid{m}\mathrel{=}\Varid{fmap}\;(\mathbin{/}\Varid{sum}\;(\Varid{\Conid{Map}.elems}\;\Varid{m}))\;\Varid{m}{}\<[E]%
\\
\>[10]{}\Varid{adj}\;\Varid{f}\;\Varid{con}\mathrel{=}\Varid{\Conid{Map}.adjust}\;\Varid{f}\;\Varid{con}\;\Varid{probs}{}\<[E]%
\ColumnHook
\end{hscode}\resethooks
\end{framed}
\caption{Immediate neighbors of a probability distribution.}
\label{algo:neighborhood}
\end{figure}

\subsection{Case studies}
\label{app:casestudies}

\begin{table}[b]
  \begin{center}
    \caption{\label{tab:casestudies:adts} Type information for ADTs used in the
      case studies.}
    \begin{tabular}{l c c c c}
      \toprule 
      \thead{Case\\ Study}
      & \thead{Number of\\ involved\\ types}
      & \thead{Number of\\ involved\\ constructors}
      & \thead{Composite\\ types}
      & \thead{Mutually\\ recursive\\ types}\\
      \midrule 
      Lisp \hfill
      & 7 & 14 & Yes & Yes \\
      Bash \hfill
      & 31 & 136 & Yes & Yes \\
      Gif  \hfill
      & 16 & 30 & Yes & No \\
      \bottomrule 
    \end{tabular}
  \end{center}
  \vspace{10pt}
\end{table}
As explained in Section \ref{sec:casestudies}, our test cases targeted three
complex programs to evaluate the power of our derivation tool, i.e. \emph{GNU
  CLISP 2.49}, \emph{GNU bash 4.4} and \emph{GIFLIB 5.1}.
We derived random generators for each test case input format using some existent
Haskell libraries.
Each one of these libraries contains data types definition encoding the
structure of the input format of its corresponding test case, as well as
serialization functions that we use to convert randomly generated Haskell values
into actual test input files.
Table \ref{tab:casestudies:adts} illustrates the complexity of the bridging
libraries used in our case studies.

\paragraph{Testing runtimes}

As we have shown, \megadeth tends to derive generators which produce very small
test cases.
However, in our tests, the size differences in the test cases generated by each
tool does not produce remarkable differences in the runtimes required to test
each corpora.
Figure \ref{fig:runtimes} shows the execution time required to test each case of
the biggest corpora previously generated by each tool consisting of 1000 test
cases.
\begin{figure}[H]
  \centering
  \vspace{10pt}
  \includegraphics[width=\columnwidth]{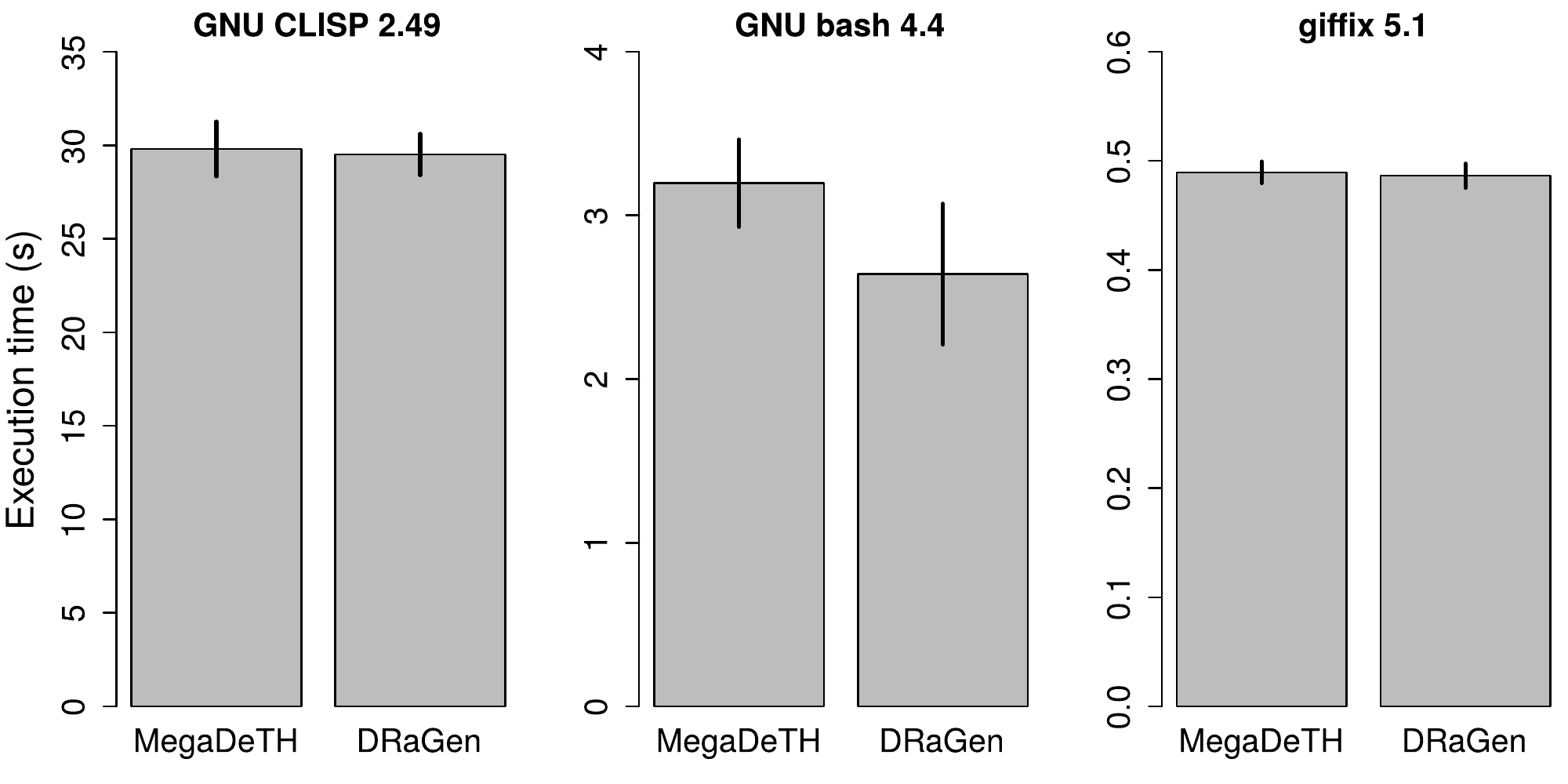}
  \caption{Execution time required to test the biggest randomly generated
    corpora consisting of 1000 files.}
  \label{fig:runtimes}
\end{figure}
}

\end{document}